\theoremstyle{definition}
\newtheorem{definition}{Definition}
\newtheorem{thm}{Theorem} 
\newtheorem{cor}[thm]{Corollary}
\newtheorem{prop}{Proposition}
\def\a{\alpha} \def\b{\beta} \def\g{\gamma} \def\d{\delta}
\def\bY{\pmb {\rm Y}}
\def\bI{\pmb {\rm I}}
\def\bG{\pmb \Gamma}
\def\bZ{\pmb {\rm Z}}
\def\third{\text{\footnotesize $\frac 13$}}
\numberwithin{equation}{section}
\title{Acoustic axes conditions revised}
\author{Yakov Itin\\
Jerusalem College of Technology, Jerusalem, Israel.}   
\begin{document}
\maketitle

\begin{abstract}
The explanation of the basic acoustic properties of crystals requires a recognition of the acoustic axes. To derive the acoustic axes in a given material, one requires both a workable method and the necessary and sufficient criteria for the existence of the acoustic axes in a partial propagation direction. 
We apply the reduced form of the acoustic tensor to the acoustic axis conditions in the present work. Using this tensor, we obtain in a compact form, allowing for qualitative analysis, the necessary and sufficient criteria for the existence of the acoustic axis.   Furthermore, the well-known Khatkevich criteria and their variants are recast in terms of the reduced acoustic tensor. This paper's primary input is an alternate minimal polynomial-based system of acoustic axes conditions. In this approach, we derive an additional characteristic of acoustic axes: the directions in which the minimal polynomial of the third order is reduced to that of the second order. Next, we offer a general solution to the second-order minimum polynomial equation, that utilizes a scalar and a unit vector for defining the acoustic tensor along the acoustic axis. It is shown that the scalar matches the eigenvalue of the reduced acoustic tensor, and the vector corresponds to the polarization into the single eigenvalue direction. We use the minimal polynomial construction to demonstrate the equivalence of different acoustic axis criteria. We demonstrate the applicability of this approach to actual computations of the acoustic axes and their fundamental properties (phase speeds and polarizations) for high symmetry cases, such as isotropic materials and RTHC crystals.
\end{abstract}

\section{Introduction}
Acoustic wave propagation in the elastic medium is represented by three distinct waves, each with its own phase speed. In anisotropic media, these three speeds are typically different and depend on the spatial directions. In the majority of materials, nevertheless, there are specific directions where at least two waves have identical speeds.  The terms ``{\it acoustic axes}" refer to these specific directions. The acoustic axes are distinguished in several aspects.  The strict hyperbolicity of the system of the associated partial differential equations is lost along them, in particular. 
Furthermore, these directions show anomalies in the polarization field and degeneracy of the elasticity waves, see Alshits  \&  Lothe \cite{Alshits2} and Domanski \cite{Domanski}. Moreover, the acoustic axes have been shown to be associated with energy focussing; see Every \cite{Every1986}, Hauser, Weaver, \& Wolfe \cite{Hauser}, Vavry\v{c}uk \cite{Vavrycuk}, and the reference given therein. The latter aspect makes acoustic axes relevant in a wide range of acoustic applications.

The theoretical study of the acoustic axes may be addressed in numerous issues: 
\begin{itemize}
    \item Formulation of the necessary and sufficient conditions for the existence of acoustic axis in a given direction; see Fedorov  \cite{Fedorov}, Norris \cite{Norris}, Alshits \& Lothe \cite{Alshits}, \cite{Alshits1}, also  \cite {It-axes1}.  These conditions must be invariant under arbitrary rotations of the coordinate basis. 
    \item Study of geometric and topological features of the the acoustic axes and the associated polarization fields. These problems include the study of the existence, number, and types of acoustic axes in the general case as well as in the specific types of materials (mainly crystals); see Alshits \& Lothe \cite {Alshits}, \cite{Alshits1}, \cite{Alshits2}, Domanski \cite{Domanski}, Holm \cite{Holm}, Vavry\v{c}uk \cite{Vavrycuk}, and the references cited therein. 
    \item The development of a practical algorithm for deriving the acoustic axis directions. Such conditions do not necessary need to be invariant or independent. They must, however, be simple to implement. 
    Starting from the pioneering pair of conditions proposed by Khatkevich \cite{Khatkevich}, different systems of conditions of such type were derived by Mozhaev et al. \cite{Mozhaev}, Norris \cite{Norris}, Alshits \& Lothe \cite{Alshits}, \cite{Alshits1}.  
    \item Actual computation of the acoustic axes in natural crystals; see e.g. Fedorov \& Fedorov \cite{Fed-Fed}, Boulanger \& Hayes \cite{Boulanger}, \cite{Boulanger1}, and the references cited therein. 
\end{itemize}

Our objective in this paper is to study the theoretical aspects of the acoustic axes problem using the reduced form of the acoustic tensor, see \cite{It-axes1}. We specifically provide a compact form expression for the necessary and sufficient condition for the presence of the acoustic axis in a given spatial direction.  Because these criteria are stated by an algebraic equation of the 12th order, they are useless for the majority of quantitative applications. We reformulate the known 6th-order Khatkevich conditions as well as their various version in terms of the reduced acoustic tensor. An alternative system of 4th-order conditions based on minimal polynomials is our main finding. In this method, we derive an additional feature of acoustic axes: They are shown to be the unique directions in which the minimal polynomial of the third order is reduced to that of the second order. We prove the equivalence of several kinds of conditions, including those expressed as systems of polynomial equations with different orders. 
The key element of this proof is the minimal  polynomial condition. 
Next, we offer a general solution to the second-order minimal  polynomial equation, which uses a scalar and a unit vector to represent the acoustic tensor along the acoustic axis. It is proved that the vector corresponds to the polarization into the single eigenvalue direction, and the scalar coincides with this eigenvalue. Such settings may be advantageous for actual computations of the acoustic axes and their basic properties (phase speeds and polarizations). 
We illustrate this method with simple examples of isotropic media and cubic crystals as well as with a more involved example of the RTHC crystals. 

The present paper is organized as follows: We give a brief description of the reduced Christoffel (acoustic) tensor and outline the fundamental formulas in Section 2. The adjoint matrix approach, which forms the foundation of the Khatkevich method and its variations, is covered in Section 3. The correspondence constructs are reformulated using the reduced Christoffel tensor. We provide the new minimal polynomial construction and discuss how it relates to the adjoint matrix technique in Section 4. The minimal polynomial equation solution and polarization vector representation of the reduced acoustic tensor are covered in Section 5. The method is illustrated in Section 6 through the computation of the acoustic axes for cubic crystals and isotropic materials.  Exact formulas for the directions of the acoustic axes are obtained also by studying a more complicated version of the RTHC crystals.  The Conclusion section includes a brief discussion of our key results and their potential extensions.

Notations: We indicate the components of the tensors in the regular font and use the bold font for the tensors and vectors presented in the coordinate-free notations. In the tensorial formulas, the Roman indices vary in the range  $i,j,\cdots=(1,2,3)$. We employ the upper and lower indices summing following Einstain's summation rule in the covariant formulations. 

\section{Acoustic axis---discriminant  condition}
Acoustic waves in an anisotropic medium are determined by the Christoffel (acoustic) tensor.
The components of this tensor are expressed in terms of the elasticity tensor $C^{ijkl}$, the propagation vector ${\bf n}=(n_1,n_2,n_3)$, and the mass density $\rho$,
\begin{equation}\label{ac-ax2}
    \Gamma^{il}:=\frac 1\rho\, C^{ijkl}n_jn_k\,.
\end{equation}
The Christoffel tensor is symmetric under the exchange of its indices, 
\begin{equation}\label{ac-ax3}
\Gamma^{ij}=\Gamma^{ji},
\end{equation}
because of the symmetry of the elasticity tensor. 
Then, in general, it can be represented as a symmetric matrix $\bG$ consisting of six independent components that are quadratic polynomials of the propagation vector, $\bG={\bG(\bf n)}$. 
 In terms of the acoustic tensor,  propagation of the acoustic wave in a homogeneous medium is 
described by an ordinary eigenvalue system of three linear equations: 
\begin{equation}\label{ac-ax4}
\left( v^2\, {\pmb {\rm I}}-{\pmb\Gamma}\right){\bf U}=0\,,\qquad {\rm or}\qquad \left( v^2g^{il}-\Gamma^{il}\right)U_l=0\,.
\end{equation}
Here, the scalar $v$ denotes the phase speed, whereas the vector ${\bf U}$ denotes the wave polarization. 

For a given direction ${\bf n}$, system (\ref{ac-ax4}) has a nontrivial polarization vector ${\bf U}$ as a solution if and only if the characteristic (secular) equation 
\begin{equation}\label{ac-ax5}
\det\left(v^2\, {\pmb {\rm I}}-{\bG}\right)=0\,
\end{equation}
holds. Since $\bG$ is represented in a chosen coordinate system by a symmetric third-order matrix, it has three real eigenvalues $\lambda=v^2$. Correspondingly, three independent acoustic waves exist in the direction ${\bf n}$ providing the matrix ${\pmb\Gamma}$ is positive definite in this direction. 
For the third-order tensor $\bG$, the Eq.(\ref{ac-ax5}) can be written in terms of the variable $\lambda=v^2$ as a general cubic equation
\begin{equation}\label{ac-ax6}
f(\lambda)=\lambda^3+M\lambda^2+P\lambda+Q=0.
\end{equation}
The coefficients of this cubic are independent of the coordinate system in use. Moreover, they are expressed by the fundamental invariants of the matrix $\bG$,
\begin{equation}\label{ac-ax7}
M=-{\rm tr }\,{\bG},\qquad P=\frac 12\left({\rm tr }^2\bG-{\rm tr }\,\bG^2\right),\qquad Q=-{\rm det }\,\bG. 
\end{equation}
Since the components of the matrix $\bG={\bG(\bf n)}$ are the quadratic homogeneous polynomials in the vector ${\bf n}$, we are dealing in (\ref{ac-ax7}) with homogeneous polynomials in  ${\bf n}$ of order $2,4,$ and 6 for the coefficients $M,P,Q$, respectively. In the case of three distinct solutions of the characteristic equation (\ref{ac-ax6}) the triad of orthonormal polarization vectors $\{{\bf U}^{(1)},{\bf U}^{(2)},{\bf U}^{(3)}\}$ is uniquely definite. 

{\it Acoustic axes} are defined as the propagation directions ${\bf n}$ where two or three acoustic waves have an equal phase speed; see Khatkevich 
 \cite{Khatkevich},  Fedorov \cite{Fedorov}, etc.  The polarization vector ${\bf U}^{(3)}$ is uniquely definite in the situation of only two equal speeds $v^{(1)}=v^{(2)}\ne v^{(3)}$, whereas the other two vectors, ${\bf U}^{(1)}$  and ${\bf U}^{(2)}$, can be selected arbitrarily from the plane perpendicular to ${\bf U}^{(3)}$. In the case of three equal speeds the polarization vectors $\{{\bf U}^{(1)},{\bf U}^{(2)},{\bf U}^{(3)}\}$ can be chosen arbitrarily from the unit sphere.  
 In both cases, one speaks on the {\it degenerate wave propagation} along the directions ${\bf n}$; see \cite{Alshits1}. 
 
 Apart from the acoustic axes, it is useful to consider the additional  ``special directions" defined by Fedorov \cite{Fedorov}. 
In particular, ${\bf n}$ is referred to as a {\it pure longitudinal direction} if one of the polarization vectors happens to be parallel to it, i.e., ${\bf n}={\bf U}$.  The term {\it pure shear direction} refers to ${\bf n}$ when there is a polarization vector that is perpendicular to it, ${\bf n}\bot{\bf U}$. Since three polarizations are mutually orthogonal, all longitudinal direction  are shear directions; the opposite is not true. 

Considering the acoustic axes, we conclude that they are all shear directions. Indeed, let ${\bf n}$ be an acoustic axis with ${\bf U}^{(3)}$ corresponding to the single eigenvalue and ${\bf U}^{(1)},{\bf U}^{(2)}$ corresponding to the double eigenvalue. Denote the plane spanned by the vectors ${\bf U}^{(1)}$ and ${\bf U}^{(2)}$ by $L_1$. This plane is degenerate---every vector lying in $L_1$ can serve as a polarization corresponding to the same double eigenvalue. Denote the plane perpendicular to ${\bf n}$ by $L_2$. It is necessary for the two subspaces $L_1$ and $L_2$ to intersect (or coincide) in the three-dimensional vector space. In addition to being an acoustic axis, a vector lying in the intersection $L_1\cap L_2$ is also perpendicular to $\bf n$. Then $\bf n$ is a  shear direction.  The opposite  is not true, i.e, in general, there are shear directions that are not acoustic axes.

A key issue in the current topic is to establish whether an acoustic axe exists in the propagation direction ${\bf n}$. 
From the theory of cubic equations,  the { necessary and sufficient condition} for the existence of two equal roots for Eq.(\ref{ac-ax6}) is well known. It is determined by the cubic discriminant vanishing, which takes the following form: 
\begin{equation}\label{ac-ax8}
\Delta=M^2P^2-4P^3-4M^2Q^2+18MPQ-27Q^2=0.
\end{equation}
This polynomial equation, dealing with the two independent components of the unit vector ${\bf n}$, is of the 12th order. 
In \cite{Fedorov}, Fedorov suggested Eq.(\ref{ac-ax8}) as a {\it necessary and sufficient condition} for the existence of the acoustic axe in the direction ${\bf n}$. For the qualitative analysis and the actual derivation of the acoustic axis in materials, this condition is unworkable, even though it completely solves the problem from a purely mathematical perspective. 

A simple way for streamlining the necessary and sufficient condition (\ref{ac-ax8}) was proposed in \cite{It-axes1}.
Assemble the scalar and traceless components of the acoustic tensor, 
\begin{equation}\label{ac-ax9}
{\pmb \Gamma}={\pmb {\rm Y}}+\gamma \,{\pmb {\rm I}}, \qquad {\rm or}\qquad \Gamma^{ij}=Y^{ij} +\gamma g^{ij},
    \end{equation}
    where 
    \begin{equation}
 \gamma:=\frac 13 \,{\rm tr}\,\,{\pmb \Gamma}=\frac 13 g_{ij}\Gamma^{ij}=\frac 13\left(\Gamma^{11}+\Gamma^{22}+\Gamma^{33}\right)\,\,.
    \end{equation}
Then we introduced the {\it reduced Christoffel tensor} ${\pmb {\rm Y}}:={\pmb \Gamma}-\gamma \,{\pmb {\rm I}}$ which not only meets the fundamental characteristics of the acoustic tensor ${\pmb {\rm \Gamma}}$ but also has zero trace. 
Compared to the full Christoffel tensor $\bG$, which consists of six independent components, the reduced Christoffel tensor $\bY$ consists of only five independent components. 

Substituting (\ref{ac-ax9}) into Eq.(\ref{ac-ax4}), we obtain 
\begin{equation}\label{ac-ax10}
\left(\lambda  g^{il}-Y^{il} -\gamma g^{il}\right)U_l=0\,.
\end{equation}
We now recast Eq. (\ref{ac-ax10}) as an ordinary eigenvalue problem for the reduced Christoffel tensor, introducing the {\it shifted eigenvalues} 
\begin{equation}
    \rho:=\lambda-\gamma.
\end{equation}
Then we Eq.(\ref{ac-ax10}) tales the standard form
\begin{equation}\label{ac-ax11}
\left(\rho\, {\pmb {\rm I}}-{\pmb {\rm Y}}\right){\pmb {\rm U}}=0,\qquad{\rm or}\qquad\left(\rho g^{il}-Y^{il}\right)U_l=0 \,.
\end{equation}
The characteristic equation (\ref{ac-ax5}) reads now as
\begin{equation}\label{ac-ax12}
\det\left(\rho \,{\pmb {\rm I}}-{\pmb {\rm Y}}\right)=0\,.
\end{equation}
We observe some advantages of this system being compared to (\ref{ac-ax4}):
\begin{itemize}
    \item[(i)] The symmetric matrix ${\bf Y}$ has three real eigenvalues, but their sum is zero, ${\rm tr}\,{\pmb {\rm Y}}= \rho_1+\rho_2+\rho_3=0.$
    \item[(ii)] The shifted eigenvalues $\rho$ have the same multiplicity as the original eigenvalues $\lambda$.
    \item[(iii)] Furthermore, the solutions ${\pmb {\rm U}}$ for two eigenvalue problems (\ref{ac-ax4} and (\ref{ac-ax11}) are the same.
\end{itemize}

We can now write the characteristic equation (\ref{ac-ax6}) as a special cubic algebraic equation  
\begin{equation}\label{ac-ax13}
\rho^3+P\rho+Q=0.
\end{equation}
A reduced cubic equation or a cubic equation in the depressed form are the terms used in algebra to describe this kind of equation.  
The coefficients identified in (\ref{ac-ax13}) are expressed in terms of $\bY$ matrix invariants:
\begin{equation}\label{ac-ax14}
    P=-\frac12\,{\rm tr}\,{\pmb {\rm Y}}^2,\qquad Q=-{\rm det}\,{\pmb {\rm Y}}=-\frac 13\,{\rm tr}\,{\pmb {\rm Y}}^3\,.
\end{equation}

Observe that the coefficient $P=P({\bf n})$ is non-positive for any direction the vector ${\bf n}$ takes, whereas the coefficient $Q=Q({\bf n})$ can be of any sign. Using the traceless property ${\rm tr}\,{\pmb {\rm Y}}=0$  we obtain from (\ref{ac-ax8}) {\it the necessary and sufficient condition} for double axes to exist in the direction of ${\bf n}$
\begin{equation}\label{ac-ax15}
-\Delta=4P^3+27Q^2=0.
\end{equation}
Consequently, we have
\begin{prop} {\underline{\it Discriminant condition}}

An elastic medium with the reduced acoustic tensor $\bY(\bf n)$ has am acoustic axis along the direction of vector ${\bf n}$ if and only if 
\begin{equation}\label{ac-ax16}
\boxed{2({\rm det}\, {\pmb {\rm Y}})^2=\left(\frac 13\,{\rm tr}\, {\pmb {\rm Y}}^2\right)^3}\quad \Longleftrightarrow\qquad 
\boxed{ 6\,\left({\rm tr}\,{\pmb {\rm Y}}^3\right)^2=\left({\rm tr}\, {\pmb {\rm Y}}^2\right)^3}
\end{equation}
\end{prop}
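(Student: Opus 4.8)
The plan is to reduce the claim to the depressed-cubic discriminant condition already assembled in the excerpt and then merely translate it into the two invariant forms. By the definition of an acoustic axis, $\mathbf{n}$ is an acoustic axis precisely when the characteristic equation has a repeated root; for the depressed cubic (\ref{ac-ax13}) the elementary theory of cubics tells us this happens if and only if the discriminant vanishes, which for $M=0$ is exactly the condition (\ref{ac-ax15}), namely $4P^3+27Q^2=0$. Thus the whole content of the proposition is to rewrite this single scalar equation in terms of the invariants of $\bY$.

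First I would substitute the expressions (\ref{ac-ax14}), $P=-\tfrac12\,{\rm tr}\,\bY^2$ and $Q=-\det\bY$, into $4P^3+27Q^2=0$. This gives $-\tfrac12({\rm tr}\,\bY^2)^3+27(\det\bY)^2=0$, and dividing by $27$ and rearranging produces the first boxed identity $2(\det\bY)^2=(\tfrac13\,{\rm tr}\,\bY^2)^3$. The only care needed here is the bookkeeping of the numerical constants $4,\,27,\,\tfrac12,\,\tfrac13$.

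To pass to the second boxed form I would eliminate $\det\bY$ using $\det\bY=\tfrac13\,{\rm tr}\,\bY^3$, which holds because $\bY$ is traceless. This is precisely the identity already recorded in (\ref{ac-ax14}); it follows either from Newton's identities (with ${\rm tr}\,\bY=0$ the power sums yield $e_3=\det\bY=\tfrac13\,{\rm tr}\,\bY^3$) or, more quickly, by taking the trace of the Cayley--Hamilton relation $\bY^3+P\bY+Q\,\bI=0$, which gives ${\rm tr}\,\bY^3=-3Q=3\det\bY$. Substituting $\det\bY=\tfrac13\,{\rm tr}\,\bY^3$ into the first boxed equation and clearing the resulting factors of $9$ and $27$ yields $6({\rm tr}\,\bY^3)^2=({\rm tr}\,\bY^2)^3$, the second boxed form.

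Since every step is an equivalence---the substitutions are exact and the constant factors are nonzero---the chain of ``if and only if'' statements closes and the proposition follows. I anticipate no genuine obstacle: the labour is purely the verification of the numerical constants. The only point worth flagging is that the clean reduction from the full discriminant to $4P^3+27Q^2=0$ hinges on the tracelessness $M=0$, which is exactly what makes the reduced tensor $\bY$ the natural object for this computation.
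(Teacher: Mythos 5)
Your proposal is correct and follows essentially the same route as the paper: both reduce the problem to the vanishing of the depressed-cubic discriminant $4P^3+27Q^2=0$ (valid precisely because ${\rm tr}\,\bY=0$), substitute $P=-\tfrac12\,{\rm tr}\,\bY^2$ and $Q=-\det\bY$ to get the first boxed form, and then pass to the second via the Cayley--Hamilton identity $\det\bY=\tfrac13\,{\rm tr}\,\bY^3$. Your write-up merely makes explicit the constant bookkeeping that the paper leaves implicit.
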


\hspace{1.5cm}

The latter version in (\ref{ac-ax16}) follows from the former one by the Cayley–Hamilton theorem. Since $\bY$ is a quadratic homogeneous polynomial in the vector $\bf n$, both sides of Eq. (\ref{ac-ax16}) are homogeneous 12-th order polynamials of $\bf n$. The equations are preserved under rescaling of the vector $\bf n$ or altering its direction. 

The expression of the eigenvalues of matrix $\bY$ is a further useful result that can be obtained from Eq. (\ref{ac-ax13}); see \cite{It-axes1}. Let us denote 
\begin{equation}\label{ac-ax16x}
   \sigma =-3\,\frac{{\rm det}\,\bY}{{\rm tr}\,\bY^2 }.
\end{equation}
We will refer to this scalar as the {\it speed parameter}.  
Then the eigenvalues of the matrix $\bY$ in the direction of the acoustic axe are defined solely by the parameter $\sigma$,
\begin{equation}\label{ac-ax17}
\rho_1=\rho_2=\sigma, \qquad \rho_3=-2\sigma.
\end{equation}

Notice that the matrix $\bY$, being traceless, consistently possesses both positive and negative eigenvalues $\rho$. 
Being computed along the acoustic axes, the matrix $\bY$  takes on a unique diagonal form 
\begin{equation}\label{ac-ax17x}
     \bY= \sigma\begin{pmatrix}
1 &0&0 \\
0 &1&0\\
0 &0 &-2 
  \end{pmatrix}=\pm|\sigma|\begin{pmatrix}
1 &0&0 \\
0 &1&0\\
0 &0 &-2 
  \end{pmatrix}
\end{equation}
when the eigenvectors are utilized as the basis. Thus, for two alternative signs of the parameter $\sigma$, there exist two different forms; see Fig. 1. According to Norris \cite{Norris}, the direction with a negative double eigenvalue is referred to as the {\it prolate acoustic axe}, and the case with a positive double eigenvalue is called as the {\it oblate acoustic axe}. A {\it spherical axe} is the case when there is a triple eigenvalue, i.e., $\sigma=0$. Then, the reduced acoustic tensor vanishes identically. 

Since ${{\rm tr}\,\bY^2 }$ is non-negative, we have from (\ref{ac-ax16x}) an invariant  description of these three cases: 
\begin{eqnarray}
\sigma<0 \quad\Longleftrightarrow   \quad   {{\rm det}\,\bY}>0&&\qquad {\rm prolate\, acoustic \,axe}\\
 \sigma>0 \quad\Longleftrightarrow   \quad        {{\rm det}\,\bY}<0&&\qquad {\rm oblate\, acoustic \,axe}\\
\!\!\!\!\!\!\!\!\!\!\!\!\!\!\!\! \sigma=0 \quad\Longleftrightarrow   \quad  \quad  \,\,\, \bY=0&&\qquad {\rm spherical\, acoustic \,axe}
\end{eqnarray}
 \begin{figure}[H]\label{Fig1}
    \centering
\includegraphics[width=0.45\textwidth]{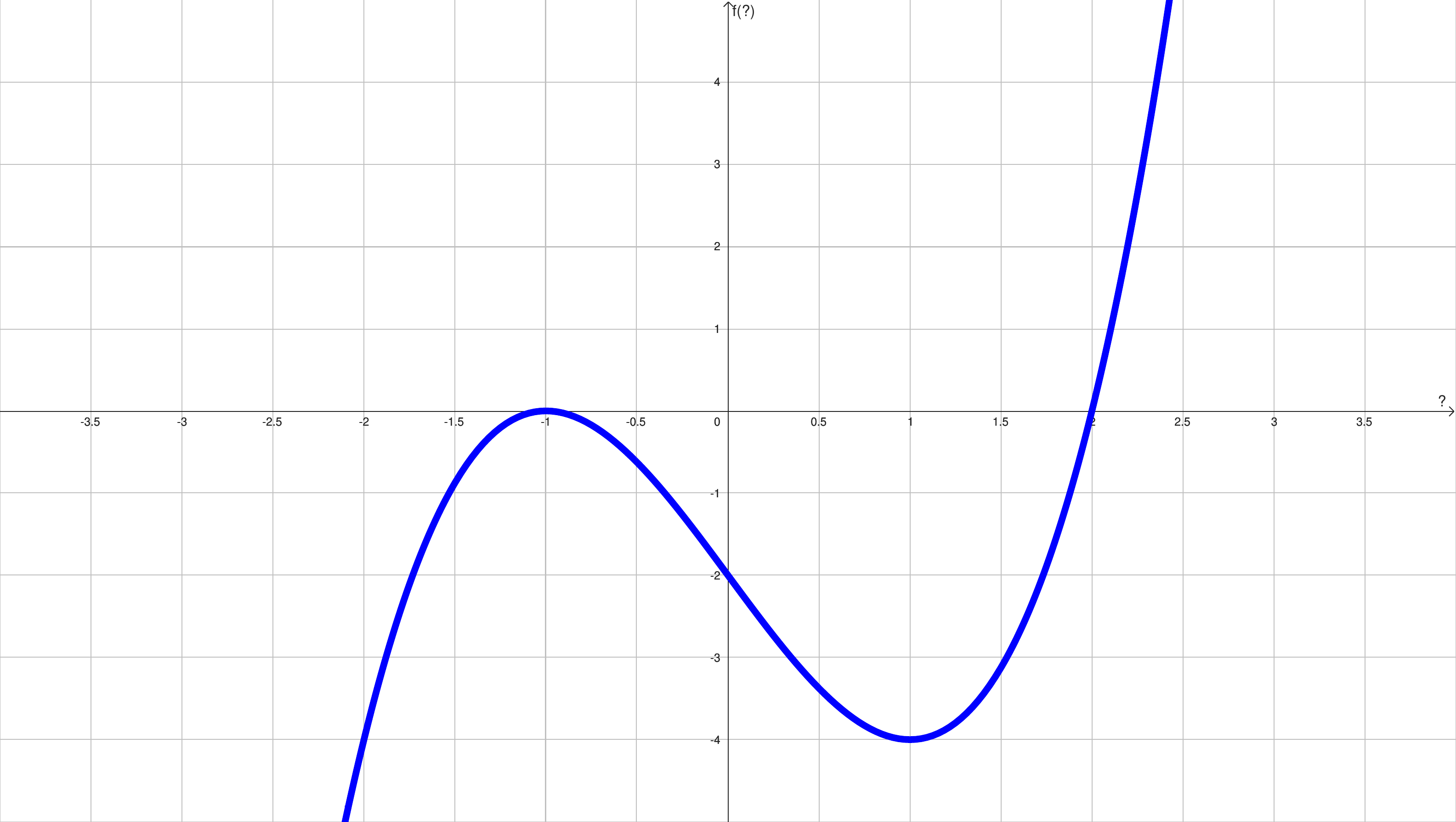}\qquad
\includegraphics[width=0.45\textwidth]{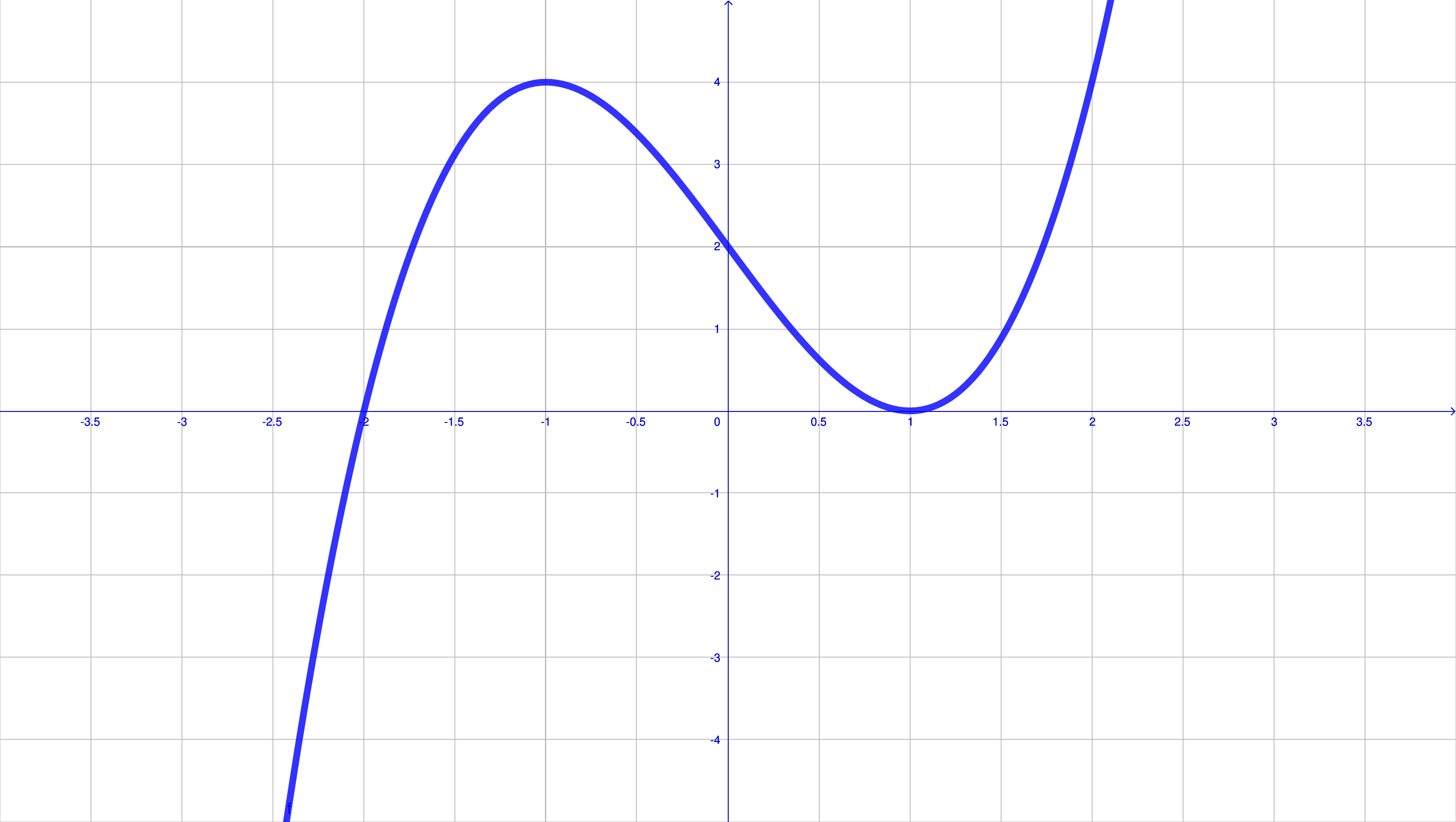}
\caption{Illustration of the characteristic polynomial in the case of the prolate and oblate acoustic axes, respectively.
}
\end{figure}

It is convenient to write the invariants of the matrix $\bY=\bY({\bf n})$ in terms of the parameter $\sigma$. Along the acoustic axe $\bf n$, we have
\begin{equation}
    {{\rm tr}\,\bY^2 }=6\sigma^2,\qquad {{\rm det}\,\bY}=-2\sigma^3.
\end{equation}
In this direction, the phase speeds are expressed for the double eigenvalue as 
\begin{equation}
    v^2=\gamma+\sigma=\frac 13 \,{\rm tr}\,\,{\pmb \Gamma}-3\,\frac{{\rm det}\,\bY}{{\rm tr}\,\bY^2 },
\end{equation}
and for the single eigenvalue as
\begin{equation}
    v^2=\gamma-2\sigma=\frac 13 \,{\rm tr}\,\,{\pmb \Gamma}+6\,\frac{{\rm det}\,\bY}{{\rm tr}\,\bY^2 }.
\end{equation}
Both these expressions must be positive. These requirements yield the following bounds on the speed parameter 
\begin{equation}\label{bound}
    \frac \gamma 2>\sigma>-\gamma.
\end{equation}
These inequalities provide the relationships between the elements of the elasticity tensor when the acoustic axes have already been computed. Subsequently, the inequalities among the distinct constituents of $C^{ijkl}$ can be obtained. When the specific examples of the elasticity materials are taken into consideration, we will revisit this topic.  
\section{Acoustic axis---adjoint matrix approach}
While the acoustic axes vector $\bf n$ is described by the compact relation (\ref{ac-ax16}), this 12th-order equation is too complex for actual computations. 
In this section, we present several possible versions of the acoustic axis condition. In the literature, these forms are expressed in terms of the complete Christoffel tensor $\bG$. Here, we reformulate them using solely the reduced Christoffel tensor $\bY$ as a building block. Additionally, we demonstrate how these conditions come from the same invariant origin, which we refer to as the  {\it adjoint matrix approach.}  
\subsection{Invariant conditions } 
The following algebraic observation leads to an alternate invariant description of the acoustic axes: 
\begin{prop}
 The characteristic equation 
\begin{equation}
    {\rm det}(\rho\, \bI-\bY)=0
\end{equation}
has a non-zero double root $\rho_1=\rho_2=\sigma$  if and only if the matrix 
\begin{equation}
   \bZ= \sigma\,\bI-\bY
\end{equation}
is of the rank one. 
\end{prop}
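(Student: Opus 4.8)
The plan is to exploit the fact that $\bY$ is a real symmetric matrix, hence orthogonally diagonalizable with three real eigenvalues $\rho_1,\rho_2,\rho_3$ that, by the traceless property ${\rm tr}\,\bY=0$ recorded in item (i) above, satisfy $\rho_1+\rho_2+\rho_3=0$. Once the spectral theorem is invoked, the whole argument reduces to bookkeeping of eigenvalues: the rank of a symmetric matrix equals the number of its nonzero eigenvalues, and the eigenvalues of $\bZ=\sigma\bI-\bY$ are exactly $\sigma-\rho_1,\ \sigma-\rho_2,\ \sigma-\rho_3$, with the same eigenvectors as $\bY$.

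For the forward direction I would assume $\rho_1=\rho_2=\sigma\neq 0$. Tracelessness forces $\rho_3=-2\sigma$, so the eigenvalues of $\bZ$ are $0,\,0,\,3\sigma$. Because $\sigma\neq 0$, exactly one of these is nonzero, whence ${\rm rank}\,\bZ=1$. For the converse I would assume ${\rm rank}\,\bZ=1$. A symmetric $3\times 3$ matrix of rank one has a two-dimensional kernel and thus exactly one nonzero eigenvalue, so precisely two of the numbers $\sigma-\rho_i$ vanish; hence $\rho_1=\rho_2=\sigma$ is a double eigenvalue of $\bY$. Tracelessness again gives $\rho_3=-2\sigma$, and the single nonzero eigenvalue of $\bZ$ equals $\sigma-\rho_3=3\sigma$. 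The rank being one rather than zero forces $3\sigma\neq 0$, i.e. $\sigma\neq 0$, which yields precisely a non-zero double root.

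The argument is essentially mechanical, and I do not anticipate a genuine obstacle. The one point that requires care is the role of the traceless condition, which is exactly what pins down the ``non-zero'' qualifier in both directions: it guarantees that a rank-one $\bZ$ cannot arise from $\sigma=0$ — that degenerate case being the spherical axis $\bY=0$, for which $\bZ$ has rank zero — and it thereby closes the loop between the ``non-zero double root'' phrasing and the ``rank one'' phrasing.

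Finally, I would observe that the single nonzero eigenvalue $3\sigma$ of $\bZ$ carries the sign of $\sigma$, reproducing the prolate/oblate distinction already identified through (\ref{ac-ax17x}), and that the spectral decomposition of the rank-one symmetric matrix $\bZ$ furnishes a representation $\bZ=3\sigma\,\mathbf{w}\mathbf{w}^{T}$, where $\mathbf{w}$ is the eigenvector of $\bY$ belonging to the single eigenvalue $\rho_3=-2\sigma$. This foreshadows the polarization-vector representation of the reduced acoustic tensor to be developed in Section 5.
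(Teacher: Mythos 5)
Your proof is correct, and while your forward direction (diagonalize $\bY$, read off the eigenvalues $0,0,3\sigma$ of $\bZ$) is essentially the paper's necessity argument, your converse takes a genuinely different and more elementary route. The paper encodes ``rank one'' as the adjoint equation ${\rm adj}\,(\sigma\bI-\bY)=0$, expands it with Levi-Civita identities into the component form (\ref{adj-cond}), extracts the two trace relations (\ref{adj7}) and (\ref{adj9}), and concludes by showing the cubic discriminant vanishes --- which only establishes that the characteristic equation has \emph{some} repeated root, and requires the side assumption ${\rm tr}\,\bY^2\neq 0$. You instead use pure spectral counting: for a real symmetric matrix, rank equals the number of nonzero eigenvalues, so rank-one $\bZ$ forces exactly two of the numbers $\sigma-\rho_i$ to vanish, which pins down the double root as $\sigma$ itself and, via tracelessness, forces $3\sigma\neq 0$. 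This is shorter, it cleanly separates the rank-zero case ($\sigma=0$, $\bY=0$, the spherical axis) from rank one --- the distinction behind the ``non-zero'' qualifier, which the paper's discriminant argument leaves implicit --- and your computation of the single nonzero eigenvalue as $3\sigma$ incidentally corrects the paper's slip where $\bZ$ is written as ${\rm diag}(0,0,2\sigma)$. What the paper's heavier computation buys, and yours does not, is a set of byproducts reused downstream: the expansion (\ref{adj-cond}) becomes the invariant adjoint condition (\ref{adj-ax-cond}), and the relations (\ref{adj7}), (\ref{adj9}) become the speed-parameter formulas (\ref{ac-ax16x}) and (\ref{min1-5}) that drive the later sections; your closing remark $\bZ=3\sigma\,\mathbf{w}\mathbf{w}^{T}$ recovers part of this, anticipating the polarization-vector representation (\ref{min1-11}).
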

\begin{proof}
An equation employing the adjoint matrix (the matrix of cofactors) can be used to express the rank-one requirement. It is given by 
\begin{equation}\label{adj1}
   {\rm adj}\,\bZ= {\rm adj}\,(\sigma\, \bI-\bY)=0. 
\end{equation}              
Therefore, we have to support two claims:
 
 {\it Necessity:} Assuming that the characteristic equation has a double root $\rho_1=\rho_2=\sigma$, we will demonstrate that $\bZ$ must have the rank of one. 
 
 Using orthogonal transformations the symmetric matrix $\bY$ can be transformed into the diagonal form, where the eigenvalues are arranged along the diagonal. Due to the traceless property, the matrix $\bY$ takes the form $\bY={\rm diag}(\sigma,\sigma,-2\sigma)$ along the acoustic axes. Next, the diagonal form of the matrix $\bZ$ is shown as $\bZ={\rm diag}(0,0,2\sigma)$. Eq. (\ref{adj1}) so satisfied identically. This statement holds in any coordinate system, even though it has only been demonstrated in one specific coordinate system (where the matrices $\bY$ and $\bZ$ are symmetric). This is due to the fact that coordinate transformations do not affect the matrix rank. 
 
 {\it Sufficiency:} 
Assuming that the equation ${\rm adj}\,(\rho\, \bI-\bY)= 0$ holds for a certain real parameter $\rho$,  we prove that $\rho$ is a double eigenvalue of the symmetric matrix $\bY$.

Let us start with the adjoint ${\rm adj}\, \bZ$ of an arbitrary $(3\times 3)$-matrix $\bZ$ expanded as 
 \begin{equation}
 {\rm adj}\, \bZ= \frac1{2!}\epsilon_{ii_1i_2}\epsilon_{jj_1j_2} Z^{i_1j_1}Z^{i_2j_2},
\end{equation}
where $\epsilon_{ijk}$ is Levi-Civita's permutation pseudotensor. 
Recall the conventional identity for the permutation pseudotensor in three-dimensional space
\begin{equation}\label{ident01}
\epsilon_{ii_1i_2}\epsilon_{jj_1j_2}=\begin{vmatrix}
g_{ij} & g_{ij_1} &g_{ij_2}\\
g_{i_1j} & g_{i_1j_1} & g_{i_1j_2}\\
g_{i_2j} & g_{i_2j_1} & g_{i_2j_2}
\end{vmatrix}.
\end{equation}
Successive contractions of this product with the metric tensor are expressed as 
\begin{equation}\label{ident02}
g^{i_2j_2}\epsilon_{ii_1i_2}\epsilon_{jj_1j_2}=g_{ij}g_{i_1j_1}-g_{ij_1}g_{i_1j},\qquad  g^{i_1j_1}g^{i_2j_2}\epsilon_{ii_1i_2}\epsilon_{jj_1j_2}= 2g_{ij}\,,
\end{equation} 
and 
\begin{equation}\label{ident03}
g^{ij}g^{i_1j_1}g^{i_2j_2}\epsilon_{ii_1i_2}\epsilon_{jj_1j_2}=6.
\end{equation}    
 The adjoint matrix expression (\ref{adj1}) is then computed as follows
 \begin{eqnarray}\label{adj-cond0}
     {\rm adj}\,(\rho\, \bI-\bY)&=&\frac1{2!}\epsilon_{ii_1i_2}\epsilon_{jj_1j_2} 
 (\rho g^{i_1j_1}-Y^{i_1j_1})(\rho g^{i_2j_2}-Y^{i_2j_2})\nonumber\\
    &=& \rho^2g_{ij}-\rho Y^{i_1j_1}(g_{ij}g_{i_1j_1}-g_{ij_1}g_{i_1j})+({\rm adj}\,\bY)_{ij}\nonumber\\
    &=& \rho^2g_{ij}+\rho Y^{i_1j_1}g_{ij_1}g_{i_1j}+({\rm adj}\,\bY)_{ij}.
 \end{eqnarray}
 We used here the trace relation $Y^{i_1j_1}g_{i_1j_1}=0$. Applying the lower-indexed (contavariant) version of the tensor $Y_{ij}=Y^{i_1j_1}g_{ij_1}g_{i_1j}$ we obtain the component-wise form of the equation ${\rm adj}\,(\rho\, \bI-\bY)= 0$ 
 \begin{equation}\label{adj-cond}
     \rho^2g_{ij}+\rho Y_{ij}+({\rm adj}\,\bY)_{ij}=0.
 \end{equation}
The trace of this matrix equation yields one scalar equation
\begin{equation}\label{adj5}
3\rho^2+{\rm tr}({\rm adj}\,\bY)=0.
\end{equation}
Since for an arbitrary third-order  traceless matrix $\bY$,
\begin{equation}
    {\rm tr}({\rm adj}\,\bY)=-\frac 12\, {\rm tr}\,\bY^2,
\end{equation}
we can rewrite (\ref{adj5}) as
\begin{equation}\label{adj7}
\rho^2=\frac 16\, {\rm tr}\,\bY^2.
\end{equation}
Multiplying both sides of Eq.(\ref{adj-cond}) by $Y^{ij}$ and using the traceless property we get another scalar equation
\begin{equation}\label{adj8}
    \rho \,{\rm tr}\,\bY^2+3{\rm det}\,\bY=0.
\end{equation}
Thus, for ${\rm tr}\,\bY^2\ne 0$, we have
\begin{equation}\label{adj9}
    \rho=-3\,\frac {{\rm det}\,\bY}{{\rm tr}\,\bY^2}. 
\end{equation}
Comparing  (\ref{adj7}) and (\ref{adj9}), we obtain 
\begin{equation}
    \frac 16\, {\rm tr}\,\bY^2=\left(-3\,\frac {{\rm det}\,\bY}{{\rm tr}\,\bY^2}\right)^2.
\end{equation}
Then the cubic discriminant vanishes:
\begin{equation}
    \Delta=27({\rm det}\,\bY)^2-\frac 12 ({\rm tr}\,\bY^2)^3=0. 
\end{equation}
This means that the characteristic equation ${\rm det}(\rho\, \bI-\bY)=0$ has at least two repeating eigenvalues.
\end{proof}

During this proof, we derived an alternative form of the acoustic axes condition:
\begin{prop}{\underline{\it Adjoint condition:}}

The acoustic axis lies along the direction of the vector ${\bf n}$ if and only if the reduced acoustic tensor $\bY=\bY({\bf n})$ satisfies 
   \begin{equation}\label{adj-ax-cond}
\boxed{\sigma^2g_{ij}+\sigma Y_{ij}+({\rm adj}\,\bY)_{ij}=0}
\end{equation} 
for a certain scalar $\sigma$.
\end{prop}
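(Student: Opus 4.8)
The plan is to observe that the boxed equation (\ref{adj-ax-cond}) is precisely the vanishing of an adjoint matrix, and then to read the statement off from the preceding Proposition together with the identity already computed in its proof. Indeed, by the expansion (\ref{adj-cond}) the left-hand side of (\ref{adj-ax-cond}) is exactly $({\rm adj}\,\bZ)_{ij}$ for $\bZ=\sigma\bI-\bY$, so (\ref{adj-ax-cond}) says nothing more than ${\rm adj}(\sigma\bI-\bY)=0$. Since the entries of the adjoint of a $(3\times3)$-matrix are its $2\times2$ cofactors, ${\rm adj}\,\bZ=0$ holds if and only if every $2\times2$ minor of $\bZ$ vanishes, i.e. ${\rm rank}\,\bZ\le 1$. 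Thus the condition ``(\ref{adj-ax-cond}) holds for some scalar $\sigma$'' is equivalent to ``there exists $\sigma$ with ${\rm rank}(\sigma\bI-\bY)\le 1$,'' and the whole task reduces to matching this rank drop with the acoustic-axis property.

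For the forward implication I would reuse the Necessity argument of the preceding Proposition. If ${\bf n}$ is an acoustic axis then $\bY$ has a repeated eigenvalue; being symmetric and traceless, its spectrum is $\{\sigma,\sigma,-2\sigma\}$ as in (\ref{ac-ax17}). Diagonalizing by an orthogonal transformation gives $\sigma\bI-\bY={\rm diag}(0,0,3\sigma)$, which has rank one in the prolate/oblate case $\sigma\ne0$ and rank zero in the spherical case $\sigma=0$; in both cases ${\rm adj}(\sigma\bI-\bY)=0$, and because rank is invariant under coordinate changes the identity (\ref{adj-ax-cond}) then holds in any basis. The relevant $\sigma$ is the speed parameter (\ref{ac-ax16x}).

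For the converse I would follow the Sufficiency computation verbatim: taking the trace of (\ref{adj-ax-cond}) yields $\sigma^2=\tfrac16\,{\rm tr}\,\bY^2$ as in (\ref{adj7}), while contracting with $Y^{ij}$ yields $\sigma\,{\rm tr}\,\bY^2+3\,{\rm det}\,\bY=0$ as in (\ref{adj8}); eliminating $\sigma$ between the two forces the discriminant $27({\rm det}\,\bY)^2-\tfrac12({\rm tr}\,\bY^2)^3$ to vanish, which by the Discriminant condition (\ref{ac-ax16}) means $\bY$ has a repeated eigenvalue, i.e. ${\bf n}$ is an acoustic axis.

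The step I expect to be the main obstacle is the degenerate (spherical) case, since both (\ref{adj9}) and the speed parameter (\ref{ac-ax16x}) divide by ${\rm tr}\,\bY^2$ and therefore silently exclude ${\rm tr}\,\bY^2=0$. I would dispatch it separately: for a symmetric traceless $\bY$ one has ${\rm tr}\,\bY^2=\rho_1^2+\rho_2^2+\rho_3^2$, so ${\rm tr}\,\bY^2=0$ forces $\bY=0$; then $\sigma=0$ solves (\ref{adj-ax-cond}) trivially and ${\bf n}$ is a spherical axis with a triple root. This case also explains why $\sigma$ is only asserted to exist, being non-unique when $\bY=0$, whereas for ${\rm tr}\,\bY^2\ne0$ equation (\ref{adj8}) pins it down to the speed parameter.
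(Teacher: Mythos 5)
Your proposal is correct and takes essentially the same route as the paper: you recognize the boxed equation as ${\rm adj}(\sigma\bI-\bY)=0$, prove necessity by diagonalizing the traceless symmetric $\bY$ to ${\rm diag}(\sigma,\sigma,-2\sigma)$ and invoking the basis-invariance of rank, and prove sufficiency by the trace and $Y^{ij}$-contraction identities that force the discriminant $27({\rm det}\,\bY)^2-\tfrac12({\rm tr}\,\bY^2)^3$ to vanish. Your explicit handling of the spherical case ${\rm tr}\,\bY^2=0$ (where $\bY=0$, $\sigma=0$) is a small but genuine improvement, since the paper's Proposition 2 treats only a non-zero double root and its sufficiency step divides by ${\rm tr}\,\bY^2$; you also implicitly correct the paper's slip $\bZ={\rm diag}(0,0,2\sigma)$, which should read ${\rm diag}(0,0,3\sigma)$ as in your argument.
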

Note that since this condition is based on tensors, it is invariant under coordinate transformations. 
Eq.(\ref{adj-ax-cond}) is a system of six equations for six variables: five components of the traceless symmetric matrix $\bY$ and a scalar parameter $\sigma$. We may extract the trace and traceless parts here. Then we have a system consisting of a scalar equation and a tensor equation with five components:
\begin{equation}
    \left\{ \begin{array}{l}
6\sigma^2= {\rm tr}\,\bY^2;\\
6\sigma Y_{ij}=-6({\rm adj}\,\bY)_{ij}-({\rm tr}\,\bY^2) g_{ij}
.\end{array} \right.
\end{equation}
These relations are quadratic polynomial equations in the components of the reduced acoustic tensor $\bY$, and therefore they are fourth-order equations in the propagation vector ${\bf n}$.
 
\subsection{Khatkevich  conditions}
For practical reasons, it can sometimes be useful to deal with non-invariant forms of the acoustic axes condition (\ref{adj-ax-cond}). Let us examine the non-diagonal components found in Eq. (\ref{adj-ax-cond}). Then we have
 \begin{equation}\label{Khat1}
     ({\rm adj}\,\bY)_{ij}=-\sigma Y_{ij},\qquad i\ne j.
 \end{equation}
In the Cartesian coordinates with the metric tensor $g_{ij}={\rm diag}(1,1,1)$, this system is presented  as 
 \begin{equation}\label{Khat2}
    \begin{cases}
     ({\rm adj}\,\bY)_{12}=-\sigma Y_{12}\\
      ({\rm adj}\,\bY)_{13}=-\sigma Y_{13}\\
      ({\rm adj}\,\bY)_{23}=-\sigma Y_{23}
    \end{cases}   \qquad \Longleftrightarrow\qquad
    \begin{cases}
     Y_{11}Y_{23}-Y_{12}Y_{13}=\sigma Y_{23}\\
     Y_{22}Y_{13}-Y_{12}Y_{23}=\sigma Y_{13}\\
     Y_{33}Y_{12}-Y_{23}Y_{13}=\sigma Y_{12}.\\
    \end{cases}   
\end{equation}
In general, we have here three independent fourth-order equations for two components of the unit vector ${\bf n}$  and the scalar parameter $\sigma$.  
 To eliminate the $\sigma$-terms we can utilize an assumption
\begin{equation}\label{assump1}
Y_{12}Y_{13}Y_{23}\ne 0.
 \end{equation}
Then, by combining the first two equations multiplying by $Y_{13}$ and $Y_{23}$, respectively, we obtain
 \begin{equation}\label{Khat3}
R_1=(Y_{11}-Y_{22})Y_{13}Y_{23}-Y_{12}(Y^2_{13}-Y^2_{23})=0.
 \end{equation}
Similarly, by combining the first and the third equations in (\ref{Khat2}) we obtain 
\begin{equation}\label{Khat4}
R_2=(Y_{11}-Y_{33})Y_{12}Y_{23}-Y_{13}(Y^2_{12}-Y^2_{23})=0.    \end{equation}
We observe that the system $(R_1,R_2)$ is completely equivalent to the original Khatkevich's system \cite{Khatkevich}. Indeed,  the off-diagonal elements of the matrices $\bY$ and $\bG$ coincide. As for the diagonal elements, they appear in Eqs. (\ref{Khat3}--\ref{Khat4}) only in a difference form. Then $Y_{11}-Y_{22}=\Gamma_{11}-\Gamma_{22}$ and $Y_{11}-Y_{33}=\Gamma_{11}-\Gamma_{33}$. Certainly, the elements of the matrices $\bY$ (or $\bG$) cannot be derived from the system $(R_1,R_2)$ uniquely. However, in terms of two independent components of the unit covector ${\bf n}$, we have here two equations of the sixth order. So the system turns out to be applicable for the actual calculations of the wave covector ${\bf n}$ at least for the non-exceptional cases. 

When at least one off-diagonal member of the matrix $Y^{ij}$ is zero, the system (\ref{Khat2}), like in the original Khatkevich conditions, gives trivial identities. In this instance, there is no information regarding the acoustic axis in the conditions (\ref{Khat3}--\ref{Khat4}).

\subsection{Alshits\&Lothe conditions}
When calculating acoustic axes in the non-exceptional case (\ref{assump1}), Khatkevich's conditions are helpful. They need to be expanded out to accommodate all possible axes. Furthermore,  these conditions are not invariant under the rotations of the basis. Alshits and Lothe \cite{Alshits} suggested adding five more equations to the system $(R_1,R_2)$ in order to address these issues. Let us demonstrate how the adjoint condition (\ref{adj-ax-cond}) yields these extra equations. Combining the second and third equations in (\ref{Khat2}) we obtain the $R_3$-equation from \cite{Alshits}
\begin{equation}
R_3=\left(Y_{22}-Y_{33}\right)Y_{12}Y_{13}-Y_{23}\left(Y^2_{12}-Y^2_{13}\right)=0.
\end{equation}
Additionally, we have three equations corresponding the the diagonal of the system (\ref{adj-ax-cond})
 \begin{equation}\label{Al2}
    \begin{cases}
    6\sigma Y_{11}=-6({\rm adj}\,\bY)_{11}-{\rm tr}\,\bY^2\\
    6\sigma Y_{22}=-6({\rm adj}\,\bY)_{22}-{\rm tr}\,\bY^2\\
     6\sigma Y_{33}=-6({\rm adj}\,\bY)_{33}-{\rm tr}\,\bY^2
    \end{cases}   \qquad \Longrightarrow\qquad
    \begin{cases}
\sigma (Y_{11}-Y_{22})=-({\rm adj}\,\bY)_{11}+({\rm adj}\,\bY)_{22}\\
\sigma (Y_{22}-Y_{33})=-({\rm adj}\,\bY)_{22}+({\rm adj}\,\bY)_{33}\\
\sigma (Y_{33}-Y_{11})=-({\rm adj}\,\bY)_{33}+({\rm adj}\,\bY)_{11}.
    \end{cases}   
\end{equation}
Calculating the elements of the adjoint matrix we have
\begin{equation}\label{Al2a}
    \begin{cases}
\sigma (Y_{11}-Y_{22})=(Y_{11}-Y_{22})Y_{33}+Y^2_{23}-Y^2_{13}\\
\sigma (Y_{22}-Y_{33})=(Y_{22}-Y_{33})Y_{11}+Y^2_{13}-Y^2_{12}\\
\sigma (Y_{33}-Y_{11})=(Y_{33}-Y_{11})Y_{22}+Y^2_{12}-Y^2_{23}.
\end{cases}
    \end{equation}
By utilizing algebraic manipulations, we remove the $\sigma$-terms to derive the remaining Alshits\&Lothe conditions.  For example, combining the third equation in (\ref{Al2a}) multiplied by $Y_{23}$ with the  first equation of (\ref{Khat2}) multiplied by $(Y_{33}-Y_{11})$ we get 
\begin{equation}\label{Al2a4}
R_4=(Y_{11}-Y_{22})(Y_{11}-Y_{33})Y_{23}-(Y_{11}-Y_{33})Y_{12}Y_{13}+Y_{23}(Y^2_{12}-Y^2_{23})=0.
    \end{equation}
Similarly combining the equations from (\ref{Khat2}) and (\ref{Al2}) and illuminating the $\sigma$-terms we obtain 
\begin{equation}\label{Al2b}
R_5=(Y_{22}-Y_{11})(Y_{22}-Y_{33})Y_{13}-(Y_{22}-Y_{33})Y_{12}Y_{23}+Y_{13}(Y^2_{12}-Y^2_{13})=0,
    \end{equation}
and
\begin{equation}\label{Al2c}
R_6=(Y_{33}-Y_{11})(Y_{33}-Y_{22})Y_{12}-(Y_{33}-Y_{22})Y_{13}Y_{23}+Y_{12}(Y^2_{13}-Y^2_{12})=0.
    \end{equation}
Finally, from the first pair of equations in (\ref{Al2}),   we obtain 
\begin{equation}\label{Al3}
R_7=(Y_{11}-Y_{22})(Y_{22}-Y_{33})(Y_{11}-Y_{33})+(Y_{22}-Y_{33})(Y^2_{13}-Y^2_{23})+(Y_{11}-Y_{22})(Y^2_{13}-Y^2_{12})=0
    \end{equation}
Even being written in terms of the matrix $Y_{ij}$ the system $(R_1,\cdots, R_7)$ includes only the differences of the diagonal elements. So it is completely equivalent to the original system of Alshits and Lothe written in terms of the matrix $\Gamma_{ij}$; see \cite{Alshits}. 

The invariance properties follow from the invariance of the tensor equation given in (\ref{adj-ax-cond}). Indeed when the coordinates transformed as $x^i\to x^{i'}$ equation (\ref{adj-ax-cond}) takes the form 
\begin{equation}\label{adj-ax-cond1}
\sigma^2g_{i'j'}+\sigma Y_{i'j'}+({\rm adj}\,\bY)_{i'j'}=0.
\end{equation}
The system $(R_1,\cdots, R_7)$ for the transformed components follows from this equation by the same procedure as above. 

Let us examine the case in which the assumption (\ref{assump1}) is not satisfied, that is, in which the matrix $\bY$ has at least one zero out-of-diagonal element. For example, suppose $Y_{23}=0$.  
Then, one more zero out-of-diagonal element is required, as may be inferred from (\ref{Khat3}). Consider now the case $Y_{13}=Y_{23}=0$. Then the equations $R_1=R_2=R_3=R_4=R_5=0$ hold identically. We are left with a pair
\begin{equation}
    \begin{cases}
        R_6=\Big((Y_{33}-Y_{11})(Y_{33}-Y_{22})-Y^2_{12}\Big)Y_{12}=0\\
        R_7=(Y_{11}-Y_{22})\Big((Y_{22}-Y_{33})(Y_{11}-Y_{33})-Y^2_{12}\Big)=0
    \end{cases}
\end{equation}
Two possible cases can be identified here:
\begin{itemize}
    \item If $Y_{12}=0$, we are left with the matrix of the scalar form $Y_{ij}=\sigma\,{\rm diag}(1,1,-2)$. The parameter $\sigma$ is derived when the expressions for $Y_{11}$ and $Y_{22}$ are substituted.
    \item If $Y_{12}\ne 0$, we are left with one fourth-order equation  $Y^2_{12}=(Y_{22}-Y_{33})(Y_{11}-Y_{33})$ for the components of the vector ${\bf n}$. It  must be solved together with the second-order equations $Y_{13}=Y_{23}=0$. 
\end{itemize}
In the follow-up, we will use examples of real crystals to discuss these exceptional cases. 

\subsection{Norris tensor condition}
An alternative invariant condition for the existence of the acoustic axe in a given direction ${\bf n}$ was proposed by Norris in \cite{Norris}. One starts with an arbitrary vector ${\bf m}$ and consider the  scalar triple product of three vectors, ${\bf m},{\bG}{\bf m}$, and ${\bG}^2{\bf m}$. Specifically, Norris demonstrated by the use of the Gibbs decomposition that a Christoffel tensor $\bG$ is uniaxial (has two equal eigenvalues) if and only if the triple product of these three vectors vanishes, 
\begin{equation}\label{Nor1}
    [{\bf m},{\bG}{\bf m},{\bG}^2{\bf m}]=0
\end{equation}
for an arbitrary vector ${\bf m}$. Recall the expression for the triple product (scalar-vector product) of three vectors: $[{\bf a},{\bf b},{\bf c}]=
{\bf a}\cdot({\bf b}\times {\bf c})$. Since this product  represents the volume of the parallelepiped spanned by the vectors ${\bf a},{\bf b},{\bf c}$, it vanishes if and only if the three vectors are co-planar.  
Equation (\ref{Nor1}) must be satisfied for any vector ${\bf m}$ when the matrix $\bG=\bG({\bf n})$ is considered  along the acoustic axis ${\bf n}$. Next,  Norris constructed a third-order tensor $\phi^{ijk}$  that is independent of the arbitrary vector ${\bf m}$. It is defined implicitly from the equation  
\begin{equation}\label{Nor2}
    [{\bf m},{\bG}{\bf m},{\bG}^2{\bf m}]=\phi^{ijk}m_im_jm_k.
    \end{equation}
    Note that the {\it Norris tensor} $\phi^{ijk}$ is fully symmetric and traceless as a result of this definition, \cite{Norris}.
Formally, these facts might be expressed as 
\begin{equation}\label{Nor3a}
    \phi^{ijk}=\phi^{(ijk)}, \qquad \phi^{ijk}g_{jk}=0.
\end{equation}
The proof can be given as follows.  The definition of the Norris tensor (\ref{Nor2}) is provided by a product with the fully symmetric combination $m_im_jm_k$, which explains why the symmetry equation in (\ref{Nor3a}) holds.  For the traceless equation, considering the tensor $\bG$ relative to a particular basis suffices for proof. By implementing the eigenvector basis and utilizing the diagonal form of the matrix $\bG={\rm diag}(\lambda_1,\lambda_2,\lambda_3)$, we can calculate
\begin{equation}\label{Nor4}
    [{\bf m},{\bG}{\bf m},{\bG}^2{\bf m}]=\begin{vmatrix}
m_1 & \lambda_1m_1 & \lambda^2_1m_1\\
m_2 & \lambda_2m_2 & \lambda^2_1m_2\\
m_3 & \lambda_3m_3 & \lambda^2_3m_3
\end{vmatrix}=m_1m_2m_3(\lambda_1-\lambda_2)(\lambda_2-\lambda_3)(\lambda_3-\lambda_1). 
\end{equation}
The identities (\ref{Nor3a}) are thus valid since the Norris tensor in this specific basis has just one non-zero component (up to transition of the indices):
\begin{equation}\label{Nor5}
   \phi_{123}= (\lambda_1-\lambda_2)(\lambda_2-\lambda_3)(\lambda_3-\lambda_1).
\end{equation}
Owing to the invariance of the conditions (\ref{Nor3a}), they  remain true in any basis. The identities (\ref{Nor3a}) reduce the tensor $\phi^{ijk}$ from twenty-seven to seven independent components. Furthermore, the acoustic axis requires that at least two of the eigenvalues be identical, which is indicated by the invariant relation 
\begin{equation}\label{Nor5a}
    \phi^{ijk}=0.
\end{equation}

It is feasible to infer that the Norris tensor must depend on just five distinct components of the reduced Christoffel tensor $\bY$ by comparing this requirement to the other versions of the acoustic axis criteria. 
Then for an arbitrary vector ${\bf m}$, a similar condition  
\begin{equation}\label{Nor6}
    [{\bf m},{\bY}{\bf m},{\bY}^2{\bf m}]=0
\end{equation}
must hold along the acoustic axis. Indeed, substituting into (\ref{Nor1}) the expressions 
\begin{equation}
    \bG=\bY+\gamma\bI,\qquad \bG^2=\bY^2+2\gamma\bY+\gamma^2\bI
\end{equation}
we obtain
\begin{equation}\label{Nor3aa}
    [{\bf m},{\bG}{\bf m},{\bG}^2{\bf m}]=[{\bf m},{\bY}{\bf m}+\gamma{\bf m},{\bY}^2{\bf m}+2\gamma{\bY}{\bf m}+\gamma^2{\bf m}]=[{\bf m},{\bY}{\bf m},{\bY}^2{\bf m}].
\end{equation}
Consequently the definition of the Norris tensor (\ref{Nor2}) can be rewritten in terms of the reduced Christoffel tensor 
\begin{equation}\label{Nor4aa}
    [{\bf m},{\bY}{\bf m},{\bY}^2{\bf m}]=\phi^{ijk}m_im_jm_k.
    \end{equation}
with the same  tensor $\phi^{ijk}$.  
    
To obtain the explicit representation of the tensor $\phi^{ijk}$ in terms of the acoustic tensor, we utilize the standard expression of the scalar triple product using the permutation pseudo-tensor, 
\begin{equation}
    [{\bf a},{\bf b},{\bf c}]=
\varepsilon^{ijk}a_ib_jc_k.
\end{equation}
We write the components of the vectors in the form
\begin{equation}\label{Nor5aa}
 ({\bY}{\bf m})_r=Y_r{}^jm_j,\qquad  ({\bY}^2{\bf m})_s=  Y_s{}^pY_p{}^km_k
\end{equation}
Then
\begin{equation}
[{\bf m},{\bY}{\bf m},{\bY}^2{\bf m}]= \left(\varepsilon^{irs}Y_r{}^jY_s{}^pY_p{}^k\right)m_im_jm_k.
\end{equation}
Since this equation holds for an arbitrary vector ${\bf m}$, the third-order Norris tensor is expressed as 
\begin{equation}
\phi^{ijk}=
\underset{(i,j,k)} {\rm sym}(\varepsilon^{irs}Y_r{}^jY_s{}^pY_p{}^k),
\end{equation} 
where the ${\rm sym}$-operator on the right-hand side denotes the symmetrization of the expression with respect to the indices $(i,j,k)$. 

Similarly to the Khatkevich and Alshits\&Lothe equations, the Norris condition is expressed as a third-order polynomial in terms of the tensor $\bY$. However, these expressions are not immediately related to one another.
 In the section that follows, we shall address this matter again. 
\section{Minimal polynomial approach}
In this section, we propose an alternative approach to the acoustic axes problem. It is based on the properties of the minimal polynomial of a matrix. The minimal polynomial is an invariant construction then it can be expanded for any second-order tensor.
\subsection{Minimal polynomial at acoustic axes}
According to the well-known Cayley-Hamilton theorem, each square matrix fulfills its own characteristic equation.  
In particular, utilizing the characteristic equation $f(\rho)=\rho^3+P\rho+Q=0$ for the reduced acoustic matrix $\bY=\bY(\bf n)$ we obtain the matrix equation
\begin{equation}\label{min1-0}
f(\bY)=\bY^3+P\bY+Q\bI=0,
\end{equation}
that holds identically for arbitrary directions of the propagation vector $\bf n$. 
This is a third-order polynomial equation for the matrix $\bY$. Other polynomials of order greater than three that vanish for the matrix $\bY$ are certainly possible to construct. There is, however, a single polynomial of a lowest order of this type. 

So the {\it minimal polynomial} $m(\rho)$ of the matrix $\bY$ is defined as the {\it smallest order} monic polynomial (with the leading coefficient equal one) that vanishes for the matrix $m(\bY)=0$. 
Let us recall some basic algebraic facts regarding the minimal polynomials: 

$\bullet$ For each given matrix, the minimal polynomial is uniquely definite.

$\bullet$ The minimal polynomial coincides with the characteristic polynomial of a symmetric $(3\times 3)$-matrix with three distinct eigenvalues, $\rho_1,\rho_2,\rho_3$. 
Thus if three roots of the polynomial in Eq.(\ref{min1-0}) are distinct the minimal polynomial is expanded as 
\begin{equation}\label{min1-1}
   m(\rho)= (\rho-\rho_1)(\rho-\rho_2)(\rho-\rho_3).
\end{equation}
The corresponding matrix equation can be written in the form
\begin{equation}
    m(\bY)= (\bY-\rho_1\bI)(\bY-\rho_2\bI)(\bY-\rho_3\bI)=0,
\end{equation}
that is equivalent to (\ref{min1-0}). 
Note that this relation holds identically for an arbitrary direction ${\bf n}$. 

\vspace{0.2cm}

$\bullet$ The characteristic polynomial in the case of a double eigenvalue $\rho_1=\rho_2$ is of the third order
\begin{equation}
   f(\rho)= (\rho-\rho_1)^2(\rho-\rho_3),
\end{equation}
whilst the minimal polynomial turns out to be of the second order,  
\begin{equation}\label{min1-2}
    m(\rho)=(\rho-\rho_1)(\rho-\rho_3). 
\end{equation}
In this case, the matrix equation takes the form 
\begin{equation}\label{min1-3}
    m(\bY)=(\bY-\rho_1\bI)(\bY-\rho_3\bI)=0. 
\end{equation}
This equation holds only for the directions of the acoustic axes. For the other  directions, the expression in the left-hand side is not equal zero. 
Consequently, the minimal polynomial is able to distinguish the acoustic axes from other wave propagation directions: along the acoustic axe, it is of the second order, while along other directions it is of the third order. 

We will apply now the traceless property of the matrix $\bY$. Then along the acoustic axis the eigenvalues can be presented by
\begin{equation}\label{min1-4}
\rho_1=\rho_2=\sigma,\qquad \rho_3=-2\sigma.
\end{equation}
Consequently, 
\begin{prop}{\underline{\it Minimal polynomial condition}}

    The reduced acoustic tensor $\bY=\bY({\bf n})$ has an acoustic axis in the direction ${\bf n}$ if and only if the equation 
    \begin{equation}\label{min1-6}
    \boxed{\bY^2+\sigma\bY-2\sigma^2\bI=0}
    \end{equation}
   holds for some scalar $\sigma$.

\end{prop}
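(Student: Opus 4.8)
The plan is to prove the stated equivalence by establishing the two implications separately, leaning on the minimal-polynomial facts for symmetric matrices collected immediately above together with the eigenvalue structure (\ref{min1-4}) that the traceless condition ${\rm tr}\,\bY=0$ forces along an acoustic axis.

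For the necessity direction I would assume that ${\bf n}$ is an acoustic axis, so that $\bY=\bY({\bf n})$ has a double eigenvalue $\rho_1=\rho_2$. As recalled in (\ref{min1-2})--(\ref{min1-3}), the minimal polynomial of such a symmetric matrix is the second-order polynomial $m(\rho)=(\rho-\rho_1)(\rho-\rho_3)$, and it annihilates $\bY$. Inserting the traceless eigenvalues (\ref{min1-4}), namely $\rho_1=\sigma$ and $\rho_3=-2\sigma$, and expanding
\begin{equation}
m(\bY)=(\bY-\sigma\bI)(\bY+2\sigma\bI)=\bY^2+\sigma\bY-2\sigma^2\bI
\end{equation}
yields (\ref{min1-6}) at once. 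This half is essentially the computation already carried out in passing above; the proposition merely packages it.

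For the sufficiency direction I would assume that (\ref{min1-6}) holds for some scalar $\sigma$ and deduce a repeated eigenvalue. Let $\rho$ be any eigenvalue of $\bY$ with eigenvector ${\bf v}$; applying (\ref{min1-6}) to ${\bf v}$ gives $(\rho^2+\sigma\rho-2\sigma^2){\bf v}=0$, hence $\rho^2+\sigma\rho-2\sigma^2=0$, i.e. $\rho\in\{\sigma,-2\sigma\}$ since the quadratic factors as $(\rho-\sigma)(\rho+2\sigma)$. Thus each of the three eigenvalues of $\bY$ takes one of only two values, so by the pigeonhole principle at least two coincide; a repeated eigenvalue means two (or three) waves share a phase speed, so ${\bf n}$ is an acoustic axis. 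Equivalently, (\ref{min1-6}) says the minimal polynomial of $\bY$ divides a degree-two polynomial, so by the symmetric-matrix fact quoted above $\bY$ has at most two distinct eigenvalues among its three, forcing degeneracy.

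The point requiring care, rather than a genuine obstacle, is reconciling the two eigenvalue patterns in the sufficiency argument with ${\rm tr}\,\bY=0$. The pattern with two eigenvalues $\sigma$ and one eigenvalue $-2\sigma$ has vanishing trace automatically; the pattern with one $\sigma$ and two $-2\sigma$ gives trace $-3\sigma$, admissible only if $\sigma=0$, and the triple patterns likewise force $\sigma=0$. Hence for $\sigma\neq0$ the eigenvalues are pinned to exactly $\{\sigma,\sigma,-2\sigma\}$, recovering (\ref{min1-4}) and confirming that the scalar in (\ref{min1-6}) is the true double eigenvalue, while for $\sigma=0$ equation (\ref{min1-6}) reduces to $\bY^2=0$, which for a symmetric matrix forces $\bY=0$, the spherical axis. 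As a cross-check one could instead obtain (\ref{min1-6}) from the already proven adjoint condition (\ref{adj-ax-cond}) via the identity ${\rm adj}\,\bY=\bY^2-\frac12({\rm tr}\,\bY^2)\bI$ valid for traceless $\bY$ together with (\ref{adj7}); substituting these transforms (\ref{adj-ax-cond}) into (\ref{min1-6}) and back, establishing the equivalence of the two invariant criteria without re-examining eigenvalues.
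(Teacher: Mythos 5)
Your proof is correct, and its necessity half coincides with the paper's own argument: along an acoustic axis the traceless matrix $\bY$ has eigenvalues $\sigma,\sigma,-2\sigma$ by (\ref{min1-4}), and the degree-two minimal polynomial $(\rho-\sigma)(\rho+2\sigma)$ annihilates $\bY$, which is exactly (\ref{min1-6}). Where you genuinely differ is in the sufficiency half. The paper disposes of it by appealing to minimal-polynomial theory: away from an acoustic axis $\bY$ has three distinct eigenvalues, its minimal polynomial equals the characteristic polynomial (degree three), hence no quadratic can annihilate $\bY$. You instead argue directly: applying (\ref{min1-6}) to an eigenvector shows every eigenvalue lies in $\{\sigma,-2\sigma\}$, and the pigeonhole principle forces a repeated eigenvalue. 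Your route is more elementary and self-contained (nothing beyond the spectral theorem is used), and your trace bookkeeping supplies something the paper leaves implicit: for $\sigma\neq 0$ the trace-free condition pins the spectrum to exactly $\{\sigma,\sigma,-2\sigma\}$, so the scalar appearing in (\ref{min1-6}) is necessarily the double shifted eigenvalue, consistent with (\ref{min1-5}), while for $\sigma=0$ symmetry forces $\bY=0$, the spherical axis. Your closing cross-check via the adjoint condition (\ref{adj-ax-cond}) is also sound and is a cleaner variant of the paper's own comparison in the subsection on the minimal polynomial versus the adjoint: substituting the identity ${\rm adj}\,\bY=\bY^2-\tfrac 12({\rm tr}\,\bY^2)\,\bI$, valid for traceless $\bY$, together with (\ref{adj7}) converts one condition into the other directly, whereas the paper multiplies by $\bY$ and must then cancel a factor of $\sigma$, which requires the side assumption $\sigma\neq 0$.
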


\hspace{1cm}

This equation can be seen as a system of six independent equations for six independent variables: five components of the traceless symmetric matrix ${\bf Y}$ and one component of the speed parameter $\sigma.$ 
Let us recall the expressions  for the parameter $\sigma$:  
\begin{equation}\label{min1-5}
    \sigma=-3\frac{{\rm det}\,{\pmb {\rm Y}}}{{\rm tr}\,{\pmb {\rm Y}}^2 },\qquad 
    \sigma^2=\frac 16 {{\rm tr}\,\bY^2 }.
\end{equation}

\subsection{Minimal polynomial vs adjoint}
The adjoint and minimal polynomial conditions are represented by quadratic equations in terms of the tensor $\bY$. Because the minimal polynomial is unique (up to the leading coefficient), these equations must be equivalent.
 Let us compare the minimal polynomial equation
\begin{equation}\label{min-vs1}
   \bY^2+\sigma\bY-2\sigma^2\bI=0.
\end{equation}
 and the adjoint matrix equation
 \begin{equation}\label{min-vs2}
   ({\rm adj}\,\bY)+\sigma\bY
   +\sigma^2\bI=0.
\end{equation}
In order to prove the eqivalence, we multiply both sides of Eq.(\ref{min-vs2}) by the matrix $\bY$. Then we obtain 
\begin{equation}\label{min-vs3}
  \bY({\rm adj}\,\bY)+\sigma\bY^2
   +\sigma^2\bY=0.
\end{equation}
In the first term, we recognize the determinant of the matrix. 
Consequently, we have 
\begin{equation}\label{min-vs4}
  \sigma\bY^2
   +\sigma^2\bY+ ({\rm det}\,\bY)\bI=0.
\end{equation}
Utilizing the diagonal form of the matrix $\bY={\rm diag}(\sigma,\sigma,-2\sigma)$, we obtain the representation ${\rm det}\,\bY=-2\sigma^3$. Notice that the latter equation holds in arbitrary coordinates due to its invariance. Substitution into (\ref{min-vs4}) yields 
\begin{equation}\label{min-vs5}
   \sigma(\bY^2
   +\sigma\bY-2\sigma^2\bI)=0.
\end{equation}
For $\sigma\ne 0$, it is equivalent to the minimal polynomial equation (\ref{min-vs1}).

\subsection{Component-wise representation}
In the components, Eq.(\ref{min1-6}) reads
\begin{equation}\label{min-comp-0}
    Y^{ij}Y_j{}^k+\sigma Y^{ik}-2\sigma^2 g^{ik}=0\,.
\end{equation}
This matrix equation can be seen as a system of six equations---three diagonal and three out-of-diagonal ones. The diagonal equations are obtained with the values of indices $i=k=1,2,3$.
\begin{equation}\label{min-comp-1}
    \begin{cases}
(Y_{11})^2+(Y_{12})^2+(Y_{13})^2+\sigma Y_{11}-2\sigma^2=0\\
(Y_{22})^2+(Y_{12})^2+(Y_{23})^2+\sigma Y_{22}-2\sigma^2=0\\
(Y_{33})^2+(Y_{13})^2+(Y_{23})^2+\sigma Y_{33}-2\sigma^2=0.\\
    \end{cases}
\end{equation}
The system of three out-of-diagonal  equations with the pairs of indices $(i,k)=(1,2),(1,3), (2,3)$ is written as
\begin{equation}\label{min-comp-2}
    \begin{cases}
Y_{11}Y_{12}+Y_{12}Y_{22}+Y_{13}Y_{32}+\sigma Y_{12}=0\\
Y_{11}Y_{13}+Y_{12}Y_{23}+Y_{13}Y_{33}+\sigma Y_{13}=0\\
Y_{12}Y_{13}+Y_{22}Y_{23}+Y_{23}Y_{33}+\sigma Y_{23}=0.
    \end{cases}
\end{equation}
We will show that system (\ref{min-comp-1}) almost always follows from the system (\ref{min-comp-2}). Using the traceless condition $Y_{11}+Y_{22}+Y_{33}=0$ we rewrite (\ref{min-comp-2}) as
\begin{equation}\label{min-comp-3}
    \begin{cases}
Y_{13}Y_{23}=Y_{12}(Y_{33}-\sigma) \\
Y_{12}Y_{23}=Y_{13}(Y_{22}-\sigma)\\
Y_{12}Y_{13}=Y_{23}(Y_{11}-\sigma).
    \end{cases}
\end{equation}
We assume $Y_{12}Y_{13}Y_{23}\ne 0$ and multiply in pairs the equations of this system to obtain
\begin{equation}\label{min-comp-4}
    \begin{cases}
(Y_{12})^2=(Y_{11}-\sigma)(Y_{22}-\sigma) \\
(Y_{13})^2=(Y_{11}-\sigma)(Y_{33}-\sigma)\\
(Y_{23})^2=(Y_{22}-\sigma)(Y_{33}-\sigma)
    \end{cases}
\end{equation}
Substituting these relations into (\ref{min-comp-1}) and using the traceless condition, we obtain the identities of the type $0=0$ that completes the proof. 

Observe that (\ref{min-comp-3}) can be rewritten as a system of two independent equations. Eliminating the parameter $\sigma$, we are left with the relations 
\begin{equation}
  \frac{Y_{13}Y_{23}}{Y_{12}}-Y_{33}=
  \frac{Y_{12}Y_{23}}{Y_{13}}-Y_{22} =
  \frac{Y_{12}Y_{13}}{Y_{23}}-Y_{11}\,.
\end{equation}
These two equations are equivalent to the original Khatkevich's system. This two-equation system suffices in the majority of ordinary situations, as the above analysis makes clear. 

Now, let us examine the  exceptional situations. For 
\begin{equation}
    Y_{12}Y_{13}Y_{23}=0,
\end{equation} 
systems (\ref{min-comp-1}---\ref{min-comp-3}) provides three possible cases:

(i) In the case $Y_{12}=Y_{13}=Y_{23}=0$, system (\ref{min-comp-2}) satisfies identically, while (\ref{min-comp-1}) takes the form 
\begin{equation}\label{min-comp-5}
    \begin{cases}
(Y_{11}+2\sigma)(Y_{11}-\sigma)=0\\
(Y_{22}+2\sigma)(Y_{22}-\sigma)=0\\
(Y_{33}+2\sigma)(Y_{33}-\sigma)=0\\
    \end{cases}
\end{equation}
Due to the traceless condition, $Y_{11}+Y_{22}+Y_{33}=0$, we are left here with only three different solutions:
\begin{equation}\label{min-comp-6}
    \bY= \sigma\begin{pmatrix}
-2 &0&0 \\
0 &1&0\\
0 &0 &1 
  \end{pmatrix}\,,\quad
      \bY= \sigma\begin{pmatrix}
1 &0&0 \\
0 &-2&0\\
0 &0 &1 
  \end{pmatrix}\,,\quad
      \bY= \sigma\begin{pmatrix}
1 &0&0 \\
0 &1&0\\
0 &0 &-2 
  \end{pmatrix}\,.
\end{equation}

(ii) Consider the case when only two out-of-diagonal terms vanish. Let, for instance,  $Y_{13}=Y_{23}=0$, but $Y_{12}\ne 0$. Then system (\ref{min-comp-3}) is equivalent to only one equation $Y_{33}=\sigma$ and system (\ref{min-comp-1}) gives 
 two equation
\begin{equation}
\begin{cases}
(Y_{11})^2+(Y_{12})^2+\sigma Y_{11}-2\sigma^2=0\\
(Y_{22})^2+(Y_{12})^2+\sigma Y_{22}-2\sigma^2=0,
\end{cases}
\end{equation}
that are dependent because of the trace equation $Y_{11}+Y_{22}+\sigma=0.$ Eliminating the parameter $\sigma$ we are left with the system of three equations
\begin{equation}
    \begin{cases}
        Y_{13}=Y_{23}=0\\
        (Y_{33}-Y_{11})(Y_{33}-Y_{22})=Y_{12}^2.
    \end{cases}
\end{equation}
This system is sufficient for determining two independent components of the  vector ${\bf n}.$ 
The interchanging of the indices yields two additional results:
\begin{equation}
    \begin{cases}
        Y_{12}=Y_{13}=0\\
        (Y_{11}-Y_{22})(Y_{11}-Y_{33})=Y_{23}^2
    \end{cases}\qquad {\rm and }\qquad 
    \begin{cases}
        Y_{12}=Y_{23}=0\\
        (Y_{22}-Y_{11})(Y_{22}-Y_{33})=Y_{13}^2.
    \end{cases}
\end{equation}

(iii) The case with only one zero non-diagonal component does not exist. Indeed, for  $Y_{23}=0$, Eq.(\ref{min-comp-6}) results in $Y_{12}Y_{13}=0$, i.e., at least one additional non-diagonal component is zero. Then we come back to the cases considered above. 
\subsection{Norris approach}
The Norris conditions follow from the equation  
\begin{equation}\label{Nor2aaa}
    [{\bf m},{\bY}{\bf m},{\bY}^2{\bf m}]=0
\end{equation}
that is assumed to hold for an arbitrary vector ${\bf m}$. 
We can establish the relation between this condition and the minimal polynomial equation. 
\begin{prop}
    The Norris tensor and the minimal polynomial vanish simultaneously
    \begin{equation}
        \phi^{ijk}=0\qquad \Longleftrightarrow\qquad  m(\bY)=0.
    \end{equation}
\end{prop}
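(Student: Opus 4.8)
The plan is to reduce both vanishing conditions to one and the same statement about the spectrum of $\bY$, namely that at least two of its three eigenvalues coincide. Since $\bY$ is real symmetric it is orthogonally diagonalizable, and since both $\phi^{ijk}$ and the minimal polynomial are invariant (tensorial) constructions, it suffices to establish the equivalence in the eigenvector basis and then invoke invariance. I would also make explicit at the outset that $m(\bY)=0$ here is meant in the sense of equation (\ref{min1-6}): the monic \emph{quadratic} $\bY^2+\sigma\bY-2\sigma^2\bI$ vanishes for some scalar $\sigma$, i.e. the minimal polynomial has dropped to the second order, which the text already identified as the defining feature of an acoustic axis.

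First I would treat the Norris side. Working in the eigenbasis where $\bY={\rm diag}(\rho_1,\rho_2,\rho_3)$, the computation (\ref{Nor4})--(\ref{Nor5}) transfers verbatim with each $\lambda_i$ replaced by $\rho_i$; this is legitimate because the identity (\ref{Nor3aa}) shows the Norris tensor is the same whether built from $\bG$ or from $\bY$, and the definition (\ref{Nor4aa}) fixes it. This yields a single independent component,
\begin{equation}
   \phi_{123}=(\rho_1-\rho_2)(\rho_2-\rho_3)(\rho_3-\rho_1),
\end{equation}
with all remaining entries zero up to permutation of indices, by the symmetry and tracelessness relations (\ref{Nor3a}). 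Hence in this basis $\phi^{ijk}=0$ if and only if $(\rho_1-\rho_2)(\rho_2-\rho_3)(\rho_3-\rho_1)=0$, that is, if and only if two eigenvalues coincide; by invariance of $\phi^{ijk}$ this characterization then holds in every basis.

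Next I would treat the minimal-polynomial side. Equation (\ref{min1-6}) asserts that $\bY$ annihilates a monic quadratic, which is possible for some $\sigma$ exactly when the minimal polynomial of $\bY$ has degree at most two; for a symmetric $(3\times3)$-matrix this occurs precisely when $\bY$ has at most two distinct eigenvalues, i.e. again when two of $\rho_1,\rho_2,\rho_3$ coincide. The traceless condition ${\rm tr}\,\bY=0$ then pins down the repeated root: denoting the double eigenvalue by $\sigma$ forces the third to be $-2\sigma$, so the degree-two minimal polynomial is $(\rho-\sigma)(\rho+2\sigma)=\rho^2+\sigma\rho-2\sigma^2$, which is exactly (\ref{min1-6}). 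Comparing the two reductions, both $\phi^{ijk}=0$ and $m(\bY)=0$ are equivalent to the coincidence of two eigenvalues of $\bY$, proving the proposition.

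The main obstacle — really the only delicate point — is the passage from the single-basis computation to a coordinate-free conclusion: one must argue that the vanishing of the lone component $\phi_{123}$ in the eigenbasis is equivalent to the vanishing of the whole tensor, and that this is basis-independent. This is settled by (\ref{Nor3a}), which leaves $\phi_{123}$ as the only free component, together with the tensorial nature of (\ref{Nor4aa}). As a cross-check that avoids choosing a basis, I would note that $\phi_{123}^2=\prod_{i<j}(\rho_i-\rho_j)^2=-(4P^3+27Q^2)$ is the cubic discriminant, so $\phi^{ijk}=0$ is equivalent to the discriminant condition (\ref{ac-ax15}) of the Discriminant Proposition, which is in turn equivalent to the Minimal polynomial condition (\ref{min1-6}); this chains the new statement directly to the results already proved.
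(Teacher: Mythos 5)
Your proposal is correct, but it takes a genuinely different route from the paper. The paper argues operator-theoretically: for necessity, it multiplies the quadratic relation $\bY^2+\sigma\bY-2\sigma^2\bI=0$ by an arbitrary vector ${\bf m}$, producing a linear dependence among ${\bf m},\,\bY{\bf m},\,\bY^2{\bf m}$, so the triple product (hence $\phi^{ijk}$) vanishes; for sufficiency, it invokes the contrapositive — away from an acoustic axis the minimal polynomial is cubic, so no quadratic annihilates $\bY$ and the three vectors are linearly independent for a suitable ${\bf m}$, forcing $\phi^{ijk}\ne 0$. You instead diagonalize and reduce both conditions to the single spectral statement that two eigenvalues of $\bY$ coincide: the eigenbasis formula $\phi_{123}=(\rho_1-\rho_2)(\rho_2-\rho_3)(\rho_3-\rho_1)$ handles the Norris side, and the standard fact that a diagonalizable matrix satisfies a monic quadratic exactly when it has at most two distinct eigenvalues (with tracelessness pinning the quadratic to $\rho^2+\sigma\rho-2\sigma^2$) handles the other. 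Your route makes the sufficiency direction fully explicit, which is the step the paper leaves terse (it tacitly relies on the existence of a vector ${\bf m}$ making $\{{\bf m},\bY{\bf m},\bY^2{\bf m}\}$ independent when the eigenvalues are distinct), and your discriminant cross-check $\phi_{123}^2=-(4P^3+27Q^2)$ ties the proposition neatly to the equivalences already proved; the paper's argument, in exchange, is coordinate-free from the start and exhibits the mechanism (linear dependence) rather than the spectrum. One small imprecision: the vanishing of all components of $\phi^{ijk}$ other than permutations of $(123)$ in the eigenbasis does not follow from the symmetry and tracelessness identities (\ref{Nor3a}) — a symmetric traceless third-order tensor still has seven independent components — but from matching coefficients of the cubic form in (\ref{Nor4}), which shows the form is proportional to $m_1m_2m_3$; the conclusion you need is exactly what that computation gives, so the fix is only a matter of citing the right equation.
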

\begin{proof}
    Let us assume that the minimal polynimial condition holds,
\begin{equation}
  m(\bY)=\bY^2+\sigma\bY-2\sigma^2\bI=0.  
\end{equation}
Multiplying both sides of this equation by an arbitrary vector ${\bf m}$ we obtain a linear relation between three vectors appearing into (\ref{Nor2aaa}) 
\begin{equation}\label{Nor3}
\bY^2 {\bf m}=-\sigma\bY {\bf m}+2\sigma^2{\bf m}.
\end{equation}
 The triple product of linearly related vectors in (\ref{Nor2}) necessary vanishes. 
 
 This condition is also sufficient because the acoustic axes are the only directions where the third-order minimal polynomial turns into the second-order one.  In all directions different from acoustic axes there are no quadratic polynomials vanishing for the tensor $\bY$, so the vectors are linearly independent. 

\end{proof}

\section{Polarization vector representation}
While the minimal polynomial condition is equivalent to the usual adjoint condition, it offers several advantages. 
In this section, we demonstrate that the minimal polynomial equation can be solved in terms of the matrix $\bY$. Then we can derive a second-order acoustic axes condition in this manner.
\subsection{Orthogonal matrix representation}
The minimal polynomial quadratic equation 
\begin{equation}\label{min1-6a}
    \bY^2+\sigma\bY-2\sigma^2\bI=0.
\end{equation}
can be rewritten as
\begin{equation}\label{min1-7}
    \left(\bY+\frac \sigma 2\bI\right)^2=\frac {9\sigma^2} 4\bI.
\end{equation}
The zero value of the parameter $\sigma$ corresponds to $\bY=0$. This is a case of a triple axis.  We assume that $\sigma\ne 0$ to handle the double axis. Let us denote
\begin{equation}\label{min1-8}
 \bZ= \frac 2{3\sigma}\bY+\frac 13\bI. 
\end{equation}
Following that, a very compact form 
\begin{equation}
    \bZ^2({\bf n})=\bI.
\end{equation}
is provided for equation  (\ref{min1-6a}) guaranteeing the existence of an acoustic axe in the direction ${\bf n}$. 
\subsection{Solution of the matrix equation}
To describe all acoustic axes we have to derive the most general solution to the equation $\bZ^2=\bI$. First, observe that due to the definition (\ref{min1-8}), we have ${\rm tr} \,\bZ=1$. Then the immediate solution $\bZ=\pm \bI$ does not fit the definition. Observe that since the matrix $\bZ$ is symmetric and involutory it is orthogonal. Then we can characterize the solutions by the well-known representations of the orthogonal matrices. As a result, we have:    
\begin{prop}
The general form of the solution to equation $\bZ^2=\bI$ for a symmetric third-order matrix $\bZ$ satisfying the relationship ${\rm tr} \,\bZ=1$ is given by
\begin{equation}\label{sol-Z}
    \bZ= \bI-2{\bf q}\otimes{\bf q},
\end{equation}
where ${\bf q}$ is an arbitrary unit vector.
\end{prop}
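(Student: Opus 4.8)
The plan is to exploit the spectral structure forced by the two hypotheses. Since $\bZ$ is symmetric it is orthogonally diagonalizable with real eigenvalues, and since it is involutory, $\bZ^2=\bI$, every eigenvalue $\mu$ must satisfy $\mu^2=1$, hence $\mu\in\{+1,-1\}$. The only remaining freedom is the multiplicity of each admissible eigenvalue, so first I would enumerate the possible spectra: for a $3\times3$ matrix with eigenvalues drawn from $\{+1,-1\}$ the trace can only equal $3,1,-1,$ or $-3$. The condition ${\rm tr}\,\bZ=1$ selects exactly one of these, namely the spectrum $(+1,+1,-1)$ up to ordering. Consequently $\bZ$ is orthogonally similar to ${\rm diag}(1,1,-1)$; it is a reflection through a plane, an improper orthogonal map with $\det\bZ=-1$.

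The construction of ${\bf q}$ then comes directly from the spectral decomposition. Let ${\bf q}$ be a unit eigenvector associated with the simple eigenvalue $-1$, completed to an orthonormal eigenbasis by two unit vectors spanning the $+1$ eigenplane. In this basis $\bZ={\rm diag}(1,1,-1)$, while the rank-one projector ${\bf q}\otimes{\bf q}$ reads ${\rm diag}(0,0,1)$. Comparing the two gives $\bZ=\bI-2\,{\bf q}\otimes{\bf q}$, and since this is an identity of tensors it holds in every coordinate system. This shows that every admissible $\bZ$ has the claimed form.

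To close the proposition I would verify the converse, i.e. that every matrix $\bZ=\bI-2\,{\bf q}\otimes{\bf q}$ with $|{\bf q}|=1$ satisfies both hypotheses. Symmetry is immediate; the trace is ${\rm tr}\,\bZ=3-2\,|{\bf q}|^2=1$; and using $({\bf q}\otimes{\bf q})^2=|{\bf q}|^2\,{\bf q}\otimes{\bf q}={\bf q}\otimes{\bf q}$ one gets $\bZ^2=\bI-4\,{\bf q}\otimes{\bf q}+4\,({\bf q}\otimes{\bf q})^2=\bI$, as required.

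I do not anticipate a genuine obstacle: the argument is a short spectral computation, and the only point requiring care is the passage from eigenvalue/multiplicity bookkeeping to the coordinate-free formula $\bI-2\,{\bf q}\otimes{\bf q}$, which is precisely the Householder reflection fixing the $+1$ eigenplane. It is worth emphasizing that ${\bf q}$ is determined only up to sign (and, for a degenerate spectrum, would not be unique at all); the hypothesis ${\rm tr}\,\bZ=1$ is exactly what excludes the degenerate spectra $\pm\bI$ and forces a one-dimensional $-1$ eigenspace, so that ${\bf q}$ is well defined up to its irrelevant sign.
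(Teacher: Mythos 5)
Your proof is correct and complete, but it takes a genuinely different route from the paper. The paper first observes that a symmetric involutory matrix is orthogonal ($\bZ\bZ^T=\bI$), then invokes the Rodrigues parametrization of $3\times 3$ orthogonal matrices, $Z^{ij}=\a g^{ij}+\b q^iq^j+\g\varepsilon^{ijk}q_k$ with $\a=\cos\theta$, $\b=\pm(1\mp\cos\theta)$, $\g=\sin\theta$; symmetry forces $\sin\theta=0$, leaving four candidate forms $\pm\bI$, $\pm(\bI-2{\bf q}\otimes{\bf q})$, and the trace condition ${\rm tr}\,\bZ=1$ selects the last one with the plus sign. You instead argue directly from the spectral theorem: the eigenvalues must lie in $\{+1,-1\}$, trace bookkeeping forces the spectrum $(1,1,-1)$, and the spectral decomposition in an eigenbasis immediately yields the Householder form $\bI-2{\bf q}\otimes{\bf q}$ with ${\bf q}$ the unit eigenvector of the simple eigenvalue $-1$. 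Your route is more elementary and self-contained—it avoids both the passage through orthogonality and the Rodrigues formula (whose extension to improper orthogonal maps requires the sign conventions the paper glosses over)—and you additionally verify the converse direction and note the sign ambiguity of ${\bf q}$, neither of which the paper makes explicit. What the paper's approach buys is continuity with its coordinate-free, tensor-component style and an explicit link to the geometry of rotations; what yours buys is a shorter argument resting only on the spectral theorem, plus the observation that $\bZ$ is precisely a reflection through the plane orthogonal to ${\bf q}$, which connects neatly to the physical interpretation of ${\bf q}$ as the single-eigenvalue polarization direction given later in the paper.
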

\begin{proof}
    In order to solve the equation $\bZ^2=\bI$, we first observe the defining properties of the matrix $\bZ$. This matrix is symmetric, $\bZ=\bZ^T$, and involutory, $\bZ^2=\bI$. Consequently, it is an orthogonal matrix, $\bZ\bZ^T=\bI$.   
Using the well-known Rodriguez formula we can express this orthogonal matrix by a mechanical rotation. In particular, 
\begin{equation}\label{min1-9}
    Z^{ij}=\a g^{ij}+\b q^i q^j +\g \varepsilon^{ijk}q_k,
\end{equation}
where the coefficients are parametrized by one parameter $\theta$
\begin{equation}\label{min1-10}
    \a=\cos\theta,\qquad \b=\pm(1\mp\cos\theta),\qquad \g=\sin\theta.
\end{equation}
Here $\theta$ describes an angle of a
rotation about the axis {\bf q}. The upper sign corresponds to the proper rotation while the lower sign is used for improper rotation (rotation together with a mirror reflection). We deduce that the matrix $\bZ$ can be symmetric if and only if $\g=\sin\theta=0$, i.e., $\cos\theta=\pm 1$.
Consequently, we have four possible forms:
\begin{equation}
    Z^{ij}=\pm g^{ij},\qquad Z^{ij}=\pm \left(g^{ij}-2q^iq^j\right).
\end{equation}
Requiring ${\rm tr}\,\bZ=1$ we are left with one possible form
\begin{equation}
Z^{ij}=g^{ij}-2q^iq^j,\qquad {\rm or}\qquad {\bZ}=\bI-2{\bf q}\otimes{\bf q}.
\end{equation}
\end{proof}
\subsection{Acoustic axes condition in a vector form}
Comparing equations (\ref{min1-8}) and (\ref{min1-9}) we obtain the most general solution of the minimal polynomial quadratic equation (\ref{min1-6a})
\begin{equation}\label{min1-11}
  \bY({\bf n})= \sigma(\bI-3{\bf q}\otimes{\bf q}).
\end{equation}
or, in the components, 
\begin{equation}\label{min1-12}
 Y^{ij}({\bf n})=\sigma(g^{ij}- 3q^i q^j).
\end{equation}
Observe that these equations are covariant---they hold in an arbitrary coordinate system. The tensor in the left-hand side of Eq.(\ref{min1-11}) is traceless as it is required by the definition of the tensor $\bY$. 

In our derivation, $\bf q$ appears as an arbitrary vector. Nonetheless, a definite physical interpretation can be assigned to it.
 Multiplying both sides of Eq.(\ref{min1-12}) by ${\bf q}$ we obtain
\begin{equation}
    Y^{ij}q_j=-2\sigma q^j.
\end{equation}
Thus ${\bf q}$ is the eigenvector of the matrix $Y^{ij}$ corresponding to its single eigenvalue $\rho=-2\sigma$. Consequently, when the system (\ref{min1-12}) is solved we obtain the directions of the acoustic axes and the polarization of the wave corresponding to the single eigenvalue. It is important to keep in mind that, for a given direction vector ${\bf n}$, the speed scalar parameter $\sigma$ and  the polarization vector ${\bf q}$ (up to its direction) must be determined uniquely.
Equation (\ref{min1-11}) can be viewed as an {\it alternative form of the acoustic axes condition.}  Formally statement can be presented as follows:

\begin{prop}{\underline{\it Polarization vector  condition}}

    The reduced acoustic tensor $\bY=\bY({\bf n})$ has an acoustic axis in the direction ${\bf n}$ if and only if there is a vector ${\bf q}$ and the scalar $\sigma$ such that the equation 
    \begin{equation}\label{min1-6aa}
    \boxed{ \bY({\bf n})= \sigma(\bI-3{\bf q}\otimes{\bf q})}
    \end{equation}
   holds. Here, $\sigma$ represents the speed parameter, while the vector ${\bf q}$ represents the polarization vector corresponding to the single eigenvalue. 

\end{prop}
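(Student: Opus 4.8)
The plan is to deduce this statement directly from two results already in hand: the \emph{minimal polynomial condition} (\ref{min1-6}) and the solved form (\ref{sol-Z}) of the involutory equation $\bZ^2=\bI$. Thus the argument is essentially a bookkeeping exercise that transports the representation (\ref{sol-Z}) back through the change of variable (\ref{min1-8}), together with one direct substitution for the converse.

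For \emph{necessity}, I would start from the assumption that ${\bf n}$ is an acoustic axis, so that by the minimal polynomial condition $\bY^2+\sigma\bY-2\sigma^2\bI=0$ for some scalar $\sigma$. The degenerate case $\sigma=0$ forces $\bY=0$ (the spherical axis), and then (\ref{min1-6aa}) holds trivially with vanishing right-hand side for any unit ${\bf q}$. When $\sigma\neq 0$ I would introduce $\bZ=\tfrac{2}{3\sigma}\bY+\tfrac13\bI$ as in (\ref{min1-8}); the computation already recorded in this section shows $\bZ^2=\bI$, while $\bZ$ is symmetric because $\bY$ is and satisfies ${\rm tr}\,\bZ=1$ because ${\rm tr}\,\bY=0$. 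The solution proposition (\ref{sol-Z}) then yields $\bZ=\bI-2{\bf q}\otimes{\bf q}$ for a unit vector ${\bf q}$, and inverting (\ref{min1-8}) gives $\bY=\tfrac{3\sigma}{2}\bigl(\bZ-\tfrac13\bI\bigr)=\sigma(\bI-3{\bf q}\otimes{\bf q})$, which is exactly (\ref{min1-6aa}).

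For \emph{sufficiency}, I would argue by direct substitution. Assume $\bY=\sigma(\bI-3{\bf q}\otimes{\bf q})$ with ${\bf q}$ a unit vector; tracelessness of $\bY$ is then automatic since ${\rm tr}({\bf q}\otimes{\bf q})=1$ (and for a non-normalized ${\bf q}$ the requirement ${\rm tr}\,\bY=0$ forces $|{\bf q}|=1$ anyway, so the normalization is not an extra assumption). The key step is to use idempotency of the projector ${\bf q}\otimes{\bf q}$ to compute $\bY^2=\sigma^2(\bI+3{\bf q}\otimes{\bf q})$, whence $\bY^2+\sigma\bY=2\sigma^2\bI$, i.e. the minimal polynomial equation (\ref{min1-6}) holds. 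Invoking that proposition in the opposite direction certifies that ${\bf n}$ is an acoustic axis.

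Finally, I would append the short spectral check that also substantiates the physical reading of ${\bf q}$ and $\sigma$: applying $\bY$ to ${\bf q}$ gives $\bY{\bf q}=-2\sigma{\bf q}$, while $\bY$ acts as multiplication by $\sigma$ on the orthogonal complement of ${\bf q}$, so the spectrum is $\{\sigma,\sigma,-2\sigma\}$. I expect no serious obstacle here; the only points deserving attention are the degenerate $\sigma=0$ (spherical) case and confirming that the converse produces a bona fide \emph{double} eigenvalue rather than a simple or triple one for $\sigma\neq 0$, both of which are dispatched by this computation.
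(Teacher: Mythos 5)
Your proposal is correct and follows essentially the same route as the paper: both pass from the minimal polynomial condition $\bY^2+\sigma\bY-2\sigma^2\bI=0$ to the involutory equation $\bZ^2=\bI$ via $\bZ=\tfrac{2}{3\sigma}\bY+\tfrac13\bI$, invoke the representation $\bZ=\bI-2{\bf q}\otimes{\bf q}$, and read off the eigenstructure $\{\sigma,\sigma,-2\sigma\}$ with $\bY{\bf q}=-2\sigma{\bf q}$. Your write-up is in fact slightly more complete than the paper's, since you make the sufficiency direction explicit by the projector computation $\bY^2=\sigma^2(\bI+3{\bf q}\otimes{\bf q})$ and dispatch the degenerate $\sigma=0$ (spherical) case, both of which the paper leaves implicit.
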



Observe that it is a well-posed system of 5 independent second-order algebraic equations for 5 independent variables: two components of the unit vector $\bf n$, two components of the unit vector $\bf q$, and the scalar parameter $\sigma$. 

\section{Relations between different conditions}
The above-discussed acoustic axes conditions are presented in rather different formats. The adjoint and minimal polynomial conditions are represented by matrices, the Norris condition by a third-order tensor, and the discriminant condition by a single scalar equation. In this section, we address the connections between different acoustic axis conditions.
\subsection{Equivalence between discriminant and minimal polynomial conditions}
Let us compare the matrix minimal polynomial condition with the scalar discriminant condition. 
\begin{prop}
    The discriminant condition 
    \begin{equation}\label{eq1}
        2({\rm det}\, {\pmb {\rm Y}})^2=\left(\frac 13\,{\rm tr}\, {\pmb {\rm Y}}^2\right)^3
    \end{equation}
    and the minimal polynomial condition 
\begin{equation}\label{eq2}
\bY^2+\sigma\bY-2\sigma^2\bI=0
\end{equation}
are equivalent to one another.
\end{prop}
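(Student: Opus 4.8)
The plan is to give a direct algebraic proof of the equivalence, even though both statements are already known (via the Discriminant condition and the Minimal polynomial condition propositions) to single out exactly the acoustic axes. I would treat the two implications separately and dispose of a degenerate case at the end.

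First, for the implication (\ref{eq2}) $\Rightarrow$ (\ref{eq1}), I would extract two scalar consequences of the minimal polynomial equation by taking traces. Taking the trace of $\bY^2+\sigma\bY-2\sigma^2\bI=0$ and using ${\rm tr}\,\bY=0$ yields $\sigma^2=\frac16{\rm tr}\,\bY^2$. Multiplying the same equation by $\bY$, taking the trace, and invoking ${\rm tr}\,\bY^3=3\,{\rm det}\,\bY$ (valid for a traceless matrix, cf.\ (\ref{ac-ax14})) gives $\sigma\,{\rm tr}\,\bY^2=-3\,{\rm det}\,\bY$, hence $\sigma=-3\,{\rm det}\,\bY/{\rm tr}\,\bY^2$ whenever ${\rm tr}\,\bY^2\neq0$. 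Eliminating $\sigma$ between $\sigma^2=\frac16{\rm tr}\,\bY^2$ and the squared second relation produces $\left({\rm tr}\,\bY^2\right)^3=54\,({\rm det}\,\bY)^2$, which is precisely (\ref{eq1}) after clearing the factor $27$.

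For the converse (\ref{eq1}) $\Rightarrow$ (\ref{eq2}), I would read (\ref{eq1}) as the classical vanishing-discriminant condition for the depressed characteristic cubic $f(\rho)=\rho^3+P\rho+Q$ of $\bY$, so that $f$ has a repeated root. Since $\bY$ is real symmetric, the spectral theorem guarantees that it is diagonalizable with real eigenvalues; combined with ${\rm tr}\,\bY=0$, the repeated root forces the spectrum to be $\{\sigma,\sigma,-2\sigma\}$ with $\sigma=-3\,{\rm det}\,\bY/{\rm tr}\,\bY^2$. For a diagonalizable matrix the minimal polynomial is the product of the distinct linear factors over the spectrum, namely $(\rho-\sigma)(\rho+2\sigma)=\rho^2+\sigma\rho-2\sigma^2$, and therefore $\bY^2+\sigma\bY-2\sigma^2\bI=0$, which is (\ref{eq2}).

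The main obstacle lies in this converse direction. The vanishing of the discriminant only tells us that the characteristic polynomial has a repeated root; it does not, by itself, reduce the minimal polynomial from third to second order, since a non-diagonalizable matrix with a nontrivial Jordan block would still carry the full cubic as its minimal polynomial while sharing a repeated eigenvalue. The crucial input is therefore the symmetry of $\bY$, which via the spectral theorem excludes such Jordan blocks; this is exactly the fact, established earlier, that along an acoustic axis the third-order minimal polynomial drops to second order. Finally I would settle the degenerate case ${\rm tr}\,\bY^2=0$: for a real symmetric $\bY$ this forces $\bY=0$, whereupon both conditions hold trivially with $\sigma=0$.
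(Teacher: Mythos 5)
Your proof is correct and follows essentially the same route as the paper: the forward direction by taking the trace of the minimal polynomial equation (directly and after multiplying by $\bY$) and eliminating $\sigma$, and the converse by exhibiting a repeated root of the depressed characteristic cubic and using diagonalizability of the symmetric traceless $\bY$ to drop from the cubic to the quadratic annihilating polynomial. Your explicit flagging of the Jordan-block subtlety and of the degenerate case ${\rm tr}\,\bY^2=0$ makes precise what the paper leaves implicit (the paper instead constructs the double root explicitly through the substitution $\tau=\left(\tfrac12\,{\rm det}\,\bY\right)^{1/3}$ and the factorization $(\rho+\tau)^2(\rho-2\tau)$), but the underlying argument is the same.
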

\begin{proof}

(i) Let us assume that (\ref{eq2}) holds. Applying the trace operator to both sides of this equation, we get
\begin{equation}\label{eq3}
    {\rm tr}\,\bY^2=6\sigma^2.
\end{equation}
Multiplying  (\ref{eq2}) by the matrix $\bY$ and applying the trace operator once more we obtain
\begin{equation}\label{eq4}
{\rm tr}\,\bY^3+\sigma\,{\rm tr}\,\bY^2=0.
\end{equation}
Then 
\begin{equation}\label{eq5}
 \sigma=-\frac {{\rm tr}\,\bY^3}{{\rm tr}\,\bY^2}.  
\end{equation}
Eliminating the parameter $\sigma$ from the system (\ref{eq3},\ref{eq5}) we obtain 
\begin{equation}
    6\,\left({\rm tr}\,{\pmb {\rm Y}}^3\right)^2=\left({\rm tr}\, {\pmb {\rm Y}}^2\right)^3
\end{equation}
that is equivalent for the third order matrices to (\ref{eq1}). 

(ii) Let us assume that the scalar condition (\ref{eq1}) holds. Denote 
\begin{equation}
    \left(\frac 12{\rm det}\, {\pmb {\rm Y}}\right)^{1/3}=\tau
\end{equation}
Then we have three invariants of the matrix $\bY$
\begin{equation}
   {\rm tr}\, \bY=0,\qquad  {\rm tr}\, \bY^2= 6\tau^2  ,\qquad {\rm det}\,\bY=2\tau^3.
\end{equation}
With these invariants we construct the characteristic polynomial of the matrix $\bY$
\begin{equation}\label{}
f(\rho)=\rho^3-3\tau^2\rho-2\tau^3.
\end{equation}
This expression is factorized as 
\begin{equation}
    f(\rho)=(\rho+\tau)^2(\rho-2\tau).
\end{equation}
Consequently, the minimal polynomial takes the form
\begin{equation}
    m(\rho)=(\rho+\tau)(\rho-2\tau)
\end{equation}
The corresponding matrix equation takes the form
\begin{equation}
\left(\bY+\tau\bI\right)\left(\bY-2\tau\bI\right)=0.
\end{equation}
The latter equation  coincides with (\ref{eq2}) when we identify the parameters $\tau=-\sigma.$ 
\end{proof}
\subsection{Different conditions related}
Collecting the results derived  above, we depict the equivalency relations between the various kinds of acoustic axe conditions in Fig. 2. As we can see, the minimal polynomial construction is essential to this logical flow diagram since it establishes a connection between the various conditions.  

    \begin{figure}[H]\label{Fig2}
    \centering
\includegraphics[width=0.95\textwidth]{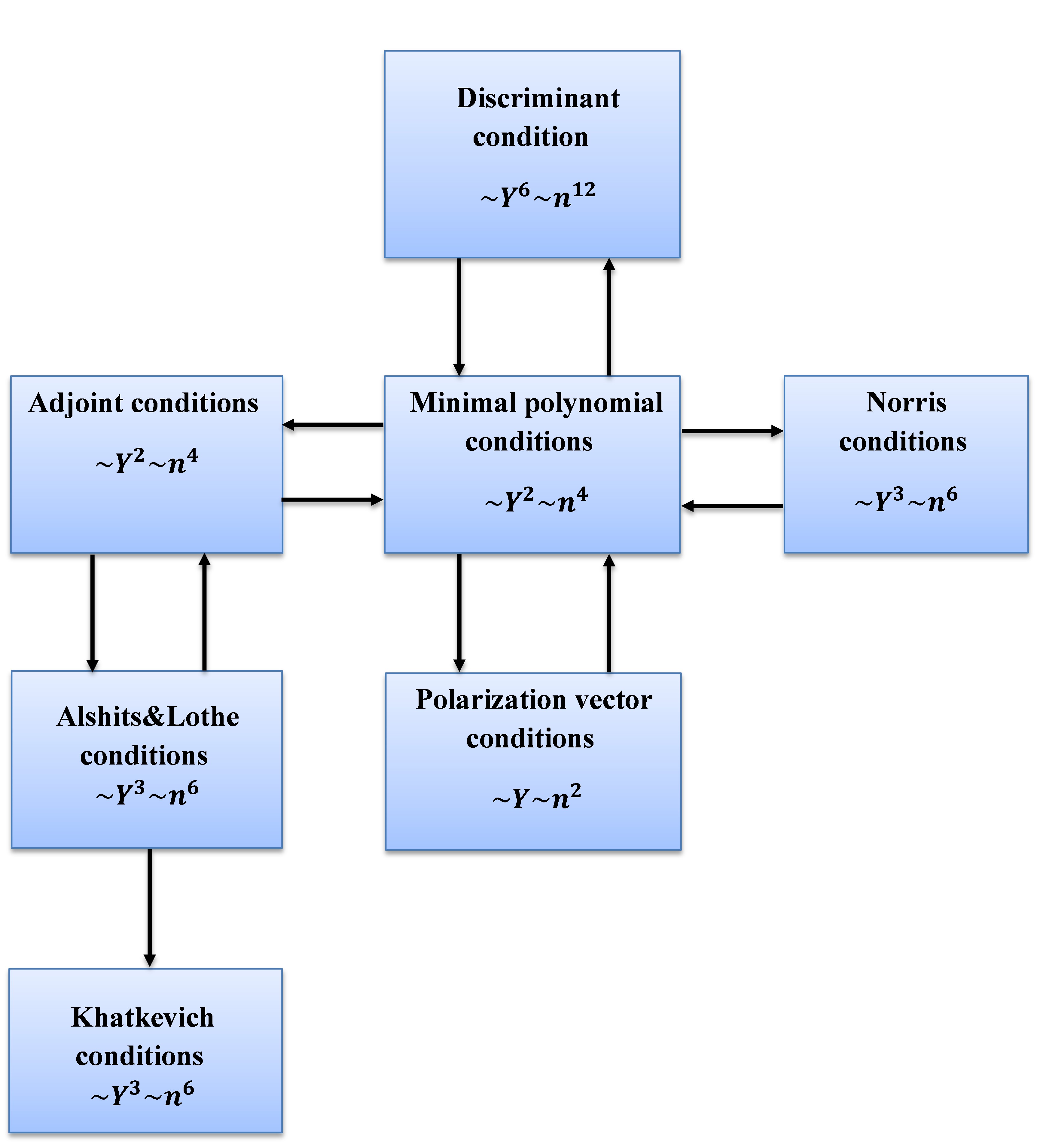}
\caption{Relationships between different kinds of acoustic conditions are shown in logical flow diagram. The polynomial orders are provided in terms of the matrix $\bY$and the vector $\bf n$.  
}
\end{figure}
\subsection{Computational aspects}
In this section, we discuss the computing algorithms that correlate to various presentations of the acoustic axis conditions. The acoustic axis problem is readily extended: in addition to finding the exact directions $\bf n$ of the acoustic axes, one further looks for the polarization vector ${\bf U}^{(3)}$ corresponding to the single eigenvalue and the speeds of three acoustic waves propagating in this direction.  The extra palarization vectors ${\bf U}^{(1)}$ and ${\bf U}^{(2)}$, which correspond to the double eigenvalue, are degenerate in the plane perpendicular to ${\bf U}^{(3)}$.

\subsubsection{12-th order conditions} 
The discriminant condition presented in Fedorov's form (\ref{ac-ax8}) or in the reduced form (\ref{ac-ax16}) is a sixth-order polynomial equation in the components of the Christoffel tensor $\bY$. Then it is a 12-th order polynomial equation in the components of the propagation vector ${\bf n}$. For the unit vector ${\bf n}$ with two independent components, this equation only has a final number of actual solutions in specific situations. 

 This 12th-order polynomial equation can be solved explicitly only in the case of the highest symmetry---isotropic media. In this case, the equation turns into an identity, see Sect.7. Then the propagation vector can be directed arbitrarily. The speeds of the waves are derived from the characteristic equation whereas the polarization vectors follow from the wave equation.  
\subsubsection{6-th order conditions} 
In Khatkevich's approach as well as its different versions, one is dealing with a system of third-order polynomial equations for the acoustic tensor components. In terms of the propagation vector components, we have here a pair of six-order homogeneous polynomial equations for two independent components of the propagation vector, say $n_1$ and $n_2$. 
As we already mentioned this system is completely the same when applied to the full matrix $\bG $ or to the reduced matrix $\bY$. In the traditional approach, this pair is derived from the matrix equation ${\rm adj}(\bG-\lambda\bI)=0$ by illumination of the parameter $\lambda$. As we demonstrated above, the same result follows from the reduced matrix equation ${\rm adj}(\bY-\lambda\bI)=0$.
The original pair of Khatkevich's equations is non-invariant and applicable only in the case when the out-of-diagonal components of the acoustic tensor are non-zero.  In the Alshits-Lothe approach, the pair of the Khatkevich conditions is extended to a system of seven conditions all of the sixth order. In a similar vein, the seven equations that satisfy the Norris condition are cubic in terms of the ${\bY}$-matrix components, or sixth order in terms of the vector ${\bf n}$. This system is capable of handling the exceptional cases and is explicitly invariant. 

\subsubsection{4-th order conditions} 
The 6-th order Khatkevich-type conditions can be obtained from the adjoint conditions by elimination of the scalar parameter $\sigma.$ Therefore, the adjoint matrix equation 
 \begin{equation}\label{min-vs2x}
   ({\rm adj}\,\bY)+\sigma\bY
   +\sigma^2\bI=0
\end{equation}
itself can serve as a base for a computation algorithm. 
 
The minimal polynomial conditions matrix system 
\begin{equation}\label{min-vs1x}
   \bY^2+\sigma\bY-2\sigma^2\bI=0
\end{equation}
appears to be even more straightforward when it comes to practical calculations.
  In both cases, one deals with a system of sixth equations for three independent variables---two components of the unit vector ${\bf n}$ and the speed parameter $\sigma$. In general, the diagonal equations follow from the out-of-diagonal ones, then the system appears to be well-posed. 
\subsubsection{2-nd order conditions} 
Using the solution to the minimal  polynomial equation, we can employ the polarization vector condition 
 \begin{equation}\label{min1-6x}
 \bY({\bf n})= \sigma(\bI-3{\bf q}\otimes{\bf q})
    \end{equation}
as the framework of a computation algorithm. This system consists of five independent equations for five variables: two components of the unit vector ${\bf n}$, two components of the unit vector ${\bf q}$, and the scalar $\sigma$. Thus, the system is well-posed. It is convenient to write it in the matrix form
\begin{equation}\label{min1-13}
\begin{pmatrix}
Y^{11} &Y^{12} &Y^{13}  \\
Y^{21} &Y^{22} &Y^{23}  \\
Y^{31} &Y^{32} &Y^{33}  
  \end{pmatrix}=-3\sigma \begin{pmatrix}
q_1^2-\third  &q_1q_2&q_1q_3 \\
q_1q_2 &q_2^2-\third&q_2q_3\\
q_1q_3 &q_2q_3  &q_3^2-\third 
  \end{pmatrix}\,.
\end{equation}
The terms in the matrices on the left and right sides can therefore be directly equated. The equations between the diagonal terms will be referred to as diagonal equations, and the equations between the out-of-diagonal terms as out-of-diagonal equations. 
Observe a useful relation between the diagonal and out-of-diagonal elements of the matrix $\bY$
\begin{equation}\label{min1-14}
    (Y^{11}-\sigma)(Y^{22}-\sigma)(Y^{33}-\sigma) =Y^{12}Y^{13}Y^{23}=-27\sigma^3(q_1q_2q_3)^2.
\end{equation}

The practical algorithm can be applied in the following sequence of steps: 
\begin{itemize}
    \item[(1)] Consider the polarization vector ${\bf q}$ that lies along the coordinate axes, for example, ${\bf q}=(1,0,0)$.  Then we have $Y^{12}=Y^{13}=Y^{23}=0$.  We may derive the vector ${\bf n}$ and the scalar $\sigma$ by combining these equations with the two independent diagonal equations. 
    \item[(2)] Take into consideration the polarization vectors ${\bf q}$, such as ${\bf q}=(q_1,q_2,0)$, that are located on the coordinate planes away from the coordinate axes. Then we have $Y^{13}=Y^{23}=0$. We obtain a system of four equations for four independent variables by combining these equations with the diagonal equations of (\ref{min1-13}). 
    \item[(3)] Consider the oblique polarization vectors ${\bf q}=(q_1,q_2,q_3)$ with the nonzero components.  The squared components $q_1^2, q_2^2$, and $q_3^2$ can be obtained by applying the relation (\ref{min1-14}). These are substituted into the two diagonal equations, yielding the scalar $\sigma$ and the components of the vector ${\bf n}$. 
\end{itemize}
We apply this algorithm to the case of the RTHC crystals in the next section. 
\section{Applications}
In this section, we examine some basic examples of acoustic waves in high-symmetric media. These results are widely recognized from the literature. Here, we aim to examine how the acoustic axes problem may be solved using the minimal polynomial approach with the polarization vector presentation. 
\subsection{Isotropic media}
To begin, let us examine the most basic instance of an isotropic elastic media. The elasticity tensor is therefore expressed in terms of the metric tensor $g^{ij}$ and two independent scalar parameters---the Lam\'e moduli $\lambda$ and $\mu$, see \cite{Landau}, \cite{Marsden}, 
\begin{equation}\label{iso1}
C^{ijkl}=\lambda\,g^{ij}g^{kl}+\mu\left(g^{ik}g^{jl}+g^{il}g^{jk}
  \right)\,.
\end{equation}
Then the Christoffel tensor depends linearly on two combinations  of media parameters,
\begin{equation}\label{iso-chris}
{\Gamma}^{ij}=\frac 1{\rho}\left(\mu g^{ij}+(\lambda+\mu)n^in^j\right)\,.
\end{equation}
To construct the depressed Christoffel tensor, we calculate  the trace of this matrix 
\begin{equation}
    \gamma=\frac 13 {\rm tr}\,\bG=\frac 13 {\Gamma}^{ij}g_{ij}=\frac 1{3\rho} (4\mu+\lambda).
\end{equation}
The depressed Christoffel tensor $\bY$ is presented by 
\begin{equation}\label{iso3}
Y^{ij}=\frac{\lambda+\mu}\rho\left(n^in^j-\frac 13 g^{ij}\right).
\end{equation}
Unlike the tensor $\bG$, 
the tensor $\bY$ is dependent on only one combination of media parameters.

Along the acoustic axis, the minimal polynomial condition (\ref{min1-11}) yields the following equation:
\begin{equation}\label{iso4}
\frac{\lambda+\mu}\rho\left(n^in^j-\frac 13 g^{ij}\right)=-3\sigma\left(q^iq^j-\frac 13 g^{ij}\right).
\end{equation}
One can derive equivalence between the unit vectors ${\bf n}$ and ${\bf q}$ by multiplying both sides of this equation by $q_j$. Following that, Eq.(\ref{iso4}) holds if and only if 
\begin{equation}\label{iso4x}
    {\bf n}={\bf q},\qquad \sigma=-\frac 1{3\rho}(\lambda+\mu).
\end{equation}
As a result, an arbitrary vector ${\bf n}$ direction serves as an acoustic axe. Moreover, the orientation of the polarization corresponding to the single eigenvalue is the same. This wave is longitudinal. Two additional shear waves match the double eigenvalue. 
The expressions for the wave speeds take the standard form: For the double eigenvalue---shear waves,
\begin{equation}
    v^2_S=\gamma+\sigma=\frac\mu\rho,
\end{equation}
 and for the single eigenvalue---longitudinal wave,
\begin{equation}
    v^2_L=\gamma-2\sigma=\frac 1{3\rho}(2\mu+\lambda).
\end{equation}
All axes are of the prolate type if the sum  $(\lambda+\mu)$ possess positive values, as indicated by Eq. (\ref{iso4x}). 
The positiveness of the squared velocities yields the bounds $\lambda+2\mu>0$ and $\mu>0$ that are less restrictive than the usual stability bounds: $\lambda+(2/3)\mu>0$ and $\mu>0$.
 
It is instructive to consider the discriminant condition for the acoustic axes. Let us calculate
\begin{equation}
    \bY^2=\frac {(\lambda+\mu)^2}3\left(n^in^j+\frac 13 g^{ij}\right), \qquad \bY^3=\frac {(\lambda+\mu)^3}3\left(n^in^j-\frac 19 g^{ij}\right).
\end{equation}
Then we have 
\begin{equation}
{\rm tr}\,\bY^2=\frac 23(\lambda+\mu)^2\,, \qquad {\rm tr}\,\bY^3=\frac 29(\lambda+\mu)^3.
\end{equation}
Consequently, the equation $6\left({\rm tr}\,{\pmb {\rm Y}}^3\right)^2=\left({\rm tr}\, {\pmb {\rm Y}}^2\right)^3$ holds identically for an arbitrary vector $\bf n$. We again arrive at a well-known result: every direction $\bf n$ serves as an acoustic axe in isotropic material.
\subsection{Cubic system}
Cubic crystals are described by three independent elasticity
constants $C_{11}, \, C_{44}$ and $C_{12}$. Correspondingly, the  Christoffel tenor (see  \cite{Musgrave}) 
\begin{equation}\label{cub-Cau-Christof}
  \Gamma^{il}=\frac 1\rho \begin{pmatrix}
\a n_1^2+\b & \d n_1n_2  & \d n_1n_3  \\
\g n_1n_2 &\a n_2^2+\b & \d n_2n_3  \\
    \d n_1n_3  & \d n_2n_3  & \a n_3^2+\b 
     \end{pmatrix}
\end{equation}
depends on three independent combinations 
\begin{equation}
   {\a}=C_{11}-C_{44},\quad {\b}= C_{44},\quad 
{\d}=C_{12}+C_{44}\,.
\end{equation}
The trace parameter is given by 
\begin{equation}
    \g=\frac{\a+3\b}{3\rho}.
\end{equation}
 Then the reduced Christoffel tensor turns out to be
\begin{equation}\label{cub-nonCau-Christof}
  Y^{il}=\frac \a \rho
  \begin{pmatrix}
n_1^2-\third &\xi n_1n_2&\xi n_1n_3 \\
\xi n_1n_2 &n_2^2-\third &\xi n_2n_3\\
\xi n_1n_3 &\xi n_2n_3  &n_3^2-\third
  \end{pmatrix} \,,
\end{equation}
where we introduce a constant factor 
\begin{equation}
    \xi=\frac{\d}{\a}=\frac{C_{12}+C_{44}}{C_{11}-C_{44}}.
\end{equation}
The tensor $\bY$ depends solely on two combinations of media characteristics. In contrast to the isotropic case provided by $\xi= 1$, the parameter $\xi\ne 1$ might be considered an anisotropy factor. 

Utilizing the minimal polynomial method, we require
\begin{equation}
\frac {\a}\rho 
  \begin{pmatrix}
n_1^2-\third  &\xi n_1n_2&\xi n_1n_3 \\
\xi n_1n_2 &n_2^2-\third&\xi n_2n_3\\
\xi n_1n_3 &\xi n_2n_3  &n_3^2-\third
  \end{pmatrix}= -3\sigma \begin{pmatrix}
q_1^2-\third &q_1q_2&q_1q_3 \\
q_1q_2 &q_2^2-\third&q_2q_3\\
q_1q_3 &q_2q_3  &q_3^2-\third 
  \end{pmatrix} \,.
\end{equation}
Two sides of this equation only coincide in two distinct cases:

(i) The out-of-diagonal terms vanish. Then we have three different acoustic axes:
\begin{equation}
{\bf n}=(1,0,0),\qquad {\bf n}=(0,1,0),\qquad {\bf n}=(0,0,1).
\end{equation}
The polarization vector and speed parameter fulfill, respectively,   
\begin{equation}
  {\bf q}={\bf n},\qquad   \sigma=-\frac \a{3\rho} =\frac 1{3\rho} (C_{44}-C_{11}).
\end{equation}
To have non-depressed waves in these directions, their squared speeds must be positive, implying that the inequality (\ref{bound}) holds. Then we have 
\begin{equation}
    \a+\b>0,\qquad \b>0,
\end{equation}
or, equivalently, 
\begin{equation}\label{bound1}
    C_{11}>0,\qquad C_{44}>0.
\end{equation}
Given that $C_{44}<C_{11}$ for the majority of cubic crystals, these axes are of the prolate type.

(ii) The diagonal terms vanish. Then we have four  different acoustic axes:
\begin{equation}\label{cub-ax}
{\bf n}=\frac{1}{\sqrt 3}(1,1,1),\qquad {\bf n}=\frac{1}{\sqrt 3}(-1,1,1),\qquad {\bf n}=\frac{1}{\sqrt 3}(1,-1,1),\qquad {\bf n}=\frac{1}{\sqrt 3}(1,1,-1). 
\end{equation}
The polarization vector and speed parameter fulfill, respectively,  
\begin{equation}
  {\bf q}={\bf n},\qquad  \sigma=-\frac {\xi\a} {3\rho}=-\frac {\d} {3\rho}=-\frac 1{3\rho} (C^{12}+C^{44}).
\end{equation}
From the inequality  (\ref{bound}) we have 
\begin{equation}
    -\frac 12(\a+3\b)<\d<\a+3\b,
\end{equation}
or, equivalently, 
\begin{equation}\label{bound2}
    -\frac 12 C_{11}-2C_{44}<C_{12}<C_{11}+C_{44}.
\end{equation}
Assuming $C_{12}+C_{44}>0$, the axes given in (\ref{cub-ax}) are likewise of the prolate type.

It is noteworthy to compare the bounds (\ref{bound1}) and (\ref{bound2}) with the ones given in the famous Born's stability criterion for cubic crystals:
\begin{equation}
    C_{11}-C_{12}>0,\qquad C_{11}+2C_{12}>0,\qquad C_{44}>0.
\end{equation}
Fig. 3 shows that bounds (\ref{bound1}) and (\ref{bound2}) are less restrictive than the necessary and sufficient Born's stability requirements.
    \begin{figure}[H]\label{Fig3}
    \centering
\includegraphics[width=0.35\textwidth]{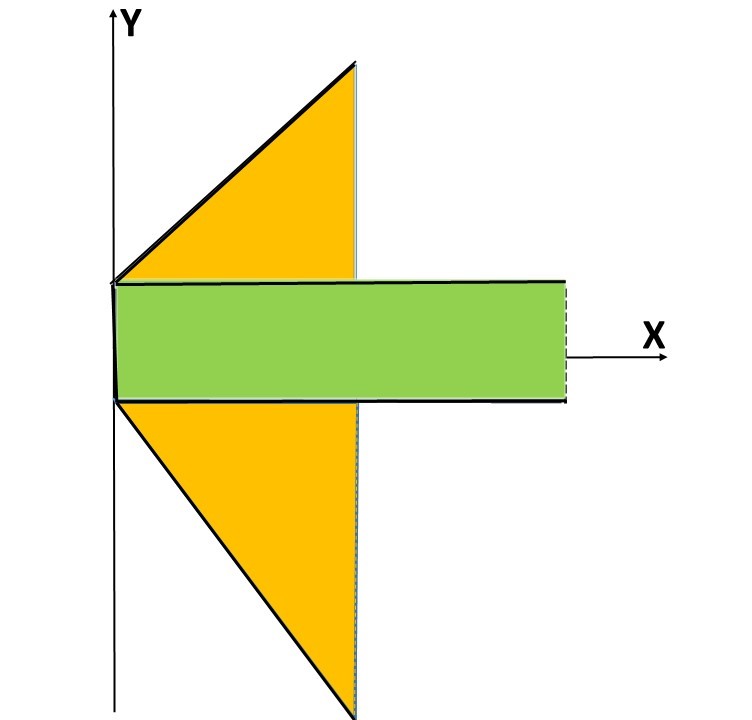}
\caption{We use the notations $X=C_{44}/C_{11}$  and $Y=C_{12}/C_{11}$. Then the bounds (\ref{bound1}) and (\ref{bound2}) are expressed as $-0.5-2X<Y<1+X$ while Born's bound is given by $-0.5<Y<1$. In both cases, $X>0$. 
}
\end{figure}

\subsection{RTHC crystals}
For all orthorhombic, tetragonal, hexagonal, or cubic crystals (‘RTHC crystals’), the diagonal and off-diagonal elements of the acoustic tensor $\Gamma^{ij}$ may be written, using appropriate Cartesian crystallographic axes, under a unique form; see Fedorov \& Fedorov \cite{Fed-Fed}, Musgrave \cite{Musgrave},  Boulanger \& Hayes \cite{Boulanger1}, \cite{Boulanger2}.  In this presentation, the diagonal elements are expressed as linear combinations of squares
\begin{equation}\label{RTHC1}
\Gamma^{11}=\sum^3_{j=1}a_{1j}n_j^2,\qquad \Gamma^{22}=\sum^3_{j=1}a_{2j}n_j^2,\qquad \Gamma^{33}=\sum^3_{j=1}a_{3j}n_j^2,
\end{equation}
while the out-of-diagonal elements are expressed as  monomials 
\begin{equation}\label{RTHC2}
\Gamma^{12}=r_{12}n_1n_2,\qquad \Gamma^{13}=r_{13}n_1n_3,\qquad 
    \Gamma^{23}=r_{23}n_2n_3.
\end{equation}
The six coefficients $a_{ij}=a_{ji}$ and the three coefficients $r_{ij}$ are linear combinations of the components of the elasticity tensor. Their explicit expressions for different symmetry classes can be find in \cite{Musgrave} and \cite{Boulanger2}. 
In particular, we observe from Eq.(\ref{cub-Cau-Christof}) that in the cubic case $a_{ii}=\a$, $a_{ij}=\b-\g$, and $r_{ij}=2\b+\g$. 

In order to examine the minimal polynomial approach, we express the components of the reduced acoustic tensor ${\bf Y}$ for the RTHC crystals. 
The trace of the acoustic tensor ${\pmb \Gamma}$ is given by 
\begin{equation}
    \gamma=\frac 13{\rm tr}\,\,{\pmb \Gamma}=\frac 13\sum^3_{j=1}(a_{1j}+a_{2j}+a_{3j})n_j^2.
\end{equation}
Correspondingly, the reduced acoustic tensor for the RTHC crystals has the components:
\begin{equation}\label{RTHC1a}
    Y^{11}=\sum^3_{j=1}b_{1j}n_j^2,\qquad Y^{22}=\sum^3_{j=1}b_{2j}n_j^2,\qquad 
    Y^{33}=\sum^3_{j=1}b_{3j}n_j^2,
\end{equation}
and
\begin{equation}\label{RTHC2x}
    Y^{12}=r_{12}n_1n_2,\qquad Y^{13}=r_{13}n_1n_3,\qquad 
    Y^{23}=r_{23}n_2n_3.
\end{equation}
Here the coefficients are given as 
\begin{equation}\label{RTHC1b}
    b_{1j}=\frac 13(2a_{1j}-a_{2j}-a_{3j}), \qquad b_{2j}=\frac 13(2a_{2j}-a_{1j}-a_{3j}), \qquad b_{3j}=\frac 13(2a_{3j}-a_{1j}-a_{2j}).
\end{equation}
Observe that the traceless property $Y^{11}+Y^{22}+Y^{33}=0$ is expressed in the following simple relation
\begin{equation}
    b_{1j}+b_{2j}+b_{3j}=0. 
\end{equation}
In the matrix form, the tensor ${\bf Y}$ is presented as 
\begin{equation}\label{RTHC3}
 {\bf Y}=\begin{pmatrix}
Y^{11} &r_{12}n_1n_2&r_{13} n_1n_3 \\
r_{12} n_1n_2 &Y^{22} &r_{23} n_2n_3\\
r_{13} n_1n_3 &r_{23} n_2n_3  &Y^{33}
  \end{pmatrix}.
\end{equation}
Using the polarization vector  representation, we have an equation 
\begin{equation}\label{RTHC4}
  \begin{pmatrix}
Y^{11} &r_{12}n_1n_2&r_{13} n_1n_3 \\
r_{12} n_1n_2 &Y^{22} &r_{23} n_2n_3\\
r_{13} n_1n_3 &r_{23} n_2n_3  &Y^{33}
  \end{pmatrix}=-3\sigma\begin{pmatrix}
q_1^2-\third &q_1q_2&q_1q_3 \\
q_1q_2 &q_2^2-\third&q_2q_3\\
q_1q_3 &q_2q_3  &q_3^2-\third 
  \end{pmatrix} \,.
\end{equation}
In general, this traceless symmetric system involves five independent equations for five variables---two components of the unit vector ${\bf n}$, two components of the unit vector ${\bf q}$, and the scalar $\sigma$. 

We start the solution Eq.(\ref{RTHC4}) with the special cases:

\vspace{0.2 cm}

(i) Let the acoustic axis coincides   with one of the coordinate axis. Then two components of the propagation vector $\bf n$ are equal to zero. Consider, for instance, ${\bf n}=(1,0,0)$. By  comparison the out-of-diagonal terms in Eq.(\ref{RTHC4}) we get three possibilities for the polarization vector ${\bf q}=(1,0,0),{\bf q}=(0,1,0)$ or ${\bf q}=(0,0,1)$. 
From the diagonal terms, we obtain for ${\bf n}={\bf q}=(1,0,0)$ 
\begin{equation}\label{RTHC4b}
    Y^{11}=\frac 13(2a_{11}-a_{21}-a_{31})=-2\sigma,
\end{equation}
and
\begin{equation}\label{RTHC4c}
    Y^{22}=\frac 13(2a_{21}-a_{11}-a_{31})=\sigma,\qquad Y^{33}=\frac 13(2a_{31}-a_{11}-a_{21})=\sigma.
\end{equation}
Consequently, to have an acoustic axe along the vector ${\bf n}=(1,0,0)$, the components of the elasticity tensor must satisfy 
\begin{equation}
    a_{21}=a_{31}\qquad {\rm and}\qquad \sigma=\frac 13(a_{21}-a_{11}).
\end{equation} 

In the alternative case ${\bf n}=(1,0,0)$ but ${\bf q}=(0,1,0)$, we have 
\begin{equation}\label{RTHC4bx}
    Y^{11}=\frac 13(2a_{11}-a_{21}-a_{31})=\sigma,
\end{equation}
and
\begin{equation}\label{RTHC4cx}
    Y^{22}=\frac 13(2a_{21}-a_{11}-a_{31})=-2\sigma,\qquad Y^{33}=\frac 13(2a_{31}-a_{11}-a_{21})=\sigma.
\end{equation}
Then we are left with a condition 
\begin{equation}
    a_{11}=a_{31}\qquad {\rm and}\qquad \sigma=\frac 13(a_{11}-a_{31}).
\end{equation} 
The similar types of restrictions on the elasticity tensor components appear for two other solutions ${\bf n}=(0,1,0)$ and ${\bf n}=(0,0,1)$. The results are presented in the following table

\begin{center}
\begin{tabular}{ |c|c|c|c|c|c|c|c| } 
 \hline
 ${\bf n}$ & {\bf q} & cond.&isotr.&cubic&hexag.&tetrag.&orthorh. \\ 
 \hline
$(1,0,0)$ & $(1,0,0)$ & $a_{21}=a_{31}$&\checkmark&\checkmark&-&-&- \\ 
  & $(0,1,0)$ & $a_{11}=a_{31}$&-&-&-&-&- \\
  & $(0,0,1)$ & $a_{11}=a_{21}$&-&-&-&-&- \\ 
$(0,1,0)$ & $(1,0,0)$ & $a_{22}=a_{32}$&-&-&-&-&-  \\ 
  & $(0,1,0)$ & $a_{12}=a_{32}$ &\checkmark&\checkmark&\checkmark&-&- \\
  & $(0,0,1)$ & $a_{12}=a_{22}$ &-&-&-&-&- \\ 
  $(0,0,1)$ & $(1,0,0)$ & $a_{23}=a_{33}$ &-&-&-&-&- \\ 
  & $(0,1,0)$ & $a_{13}=a_{33}$&-&-&-&-&-  \\
  & $(0,0,1)$ & $a_{13}=a_{23}$ &\checkmark&\checkmark&\checkmark&\checkmark&- \\ 
 \hline
\end{tabular}
\end{center}
Note that all acoustic axes are oriented in the same direction as the polarization vector ${\bf n}={\bf q}$, i.e., the waves are  longitudinal. Acoustic axes of this type do not exist for the  orthorhombic crystals. 
\vspace{0.2 cm}

(ii) Let us consider the acoustic axes on the coordinate planes out of the coordinate axes. Then only one component of the vector $\bf n$ is equal to zero. Consider, for instance, ${\bf n}=(n_1,n_2,0)$ with $n_1n_2\ne 0$. Comparing the out-of-diagonal terms we obtain Then ${\bf q}=(q_1,q_2,0)$ with $q_1q_2\ne 0$ and 
\begin{equation}\label{RTHC5}
    {r}_{12}n_1n_2=-3\sigma q_1q_2.
\end{equation}
From the diagonal terms we obtain the system
\begin{equation}\label{RTHC6}
    \begin{cases}
Y^{11}=-3\sigma q_1^2+\sigma \\
Y^{22}=-3\sigma q_2^2+\sigma\\
Y^{33}=\sigma.
    \end{cases}
\end{equation}
Eliminating the variables $q_1,q_2$ and $\sigma$, we are left with an equation that involves only the components of the vector ${\bf n}$
\begin{equation}\label{RTHC7}
    \left(Y^{11}-Y^{33}\right) \left(Y^{22}-Y^{33}\right)={r}^2_{12}n^2_1n^2_2
\end{equation}
Substituting here the expressions (\ref{RTHC1a}) we obtain the equation that is equivalent  to one given by Boulanger \& Hayes \cite{Boulanger2}, 
\begin{equation}
    \left((a_{11}-a_{31})n_1^2+(a_{12}-a_{32})n_2^2\right)\left((a_{21}-a_{31})n_1^2+(a_{22}-a_{32})n_2^2\right)=r_{12}^2n_1^2n_2^2.
\end{equation}
Thus we obtain the standard  quadratic equation $at^2+bt+c=0$ for the variable $t=(n_1/n_2)^2$ with the coefficients
\begin{eqnarray}\label{RTHC9}
    a&=&(a_{11}-a_{31})(a_{21}-a_{31})\nonumber\\
    c&=&(a_{12}-a_{32})(a_{22}-a_{32})\nonumber\\
    b&=&(a_{11}-a_{31})(a_{22}-a_{32})+(a_{12}-a_{32})(a_{21}-a_{31})-r^2_{12}\,.
\end{eqnarray}
The existence of the real positive solutions depends heavily on the values of the elasticity constants. In particular, for the cubic crystals, we have $a=c=0$. In this case, no axes lie on the coordinate planes out of the coordinate axes, as we stated in the previous section.

The speed parameter can be presented in the form
\begin{equation}\label{RTHC10}
\sigma=b_{31}n_1^2+b_{32}n_2^2.
\end{equation}

The maximal possible number of acoustic axes on the $xy$-plane (including the coordinate axes) is equal to 4. Thus, we have at most 12 axes lying on the coordinate planes.

\vspace{0.2 cm}

(iii) We consider now the acoustic axes lying out of the coordinate planes---oblique axes. 
Then all three components of the propagation vector $\bf n$ are nonzero. Comparing the out-of-diagonal terms in Eq.(\ref{RTHC4}) we deduce that all three components of the polarization  vector $\bf q$ are also nonzero. 
Then we have 
\begin{equation}\label{RTHC11}
    \begin{cases}
r_{12}n_1n_2=-3\sigma q_1q_2\\
r_{13}n_1n_3=-3\sigma q_1q_3\qquad\Longrightarrow\qquad 
r_{12}r_{13}r_{23}n^2_1n^2_2n^2_3=-27\sigma^3q^2_1q^2_2q^2_3\\
r_{23}n_2n_3=-3\sigma q_2q_3
    \end{cases}
\end{equation}
Observe that the prolate case (with $\sigma<0$) emerges for the elasticity modulus satisfying $r_{12}r_{13}r_{23}>0$, while the oblate case (with $\sigma>0$) is described by the inequality $r_{12}r_{13}r_{23}<0$. From Eqs.(\ref{RTHC11}) we obtain the proportionality relation between the components of the propagation vector ${\bf n}$ and of the single polarization vector ${\bf q}$
\begin{equation}\label{RTHC12b}
q_1^2=- \frac{\a_1}{3\sigma}n_1^2\,,\qquad 
q_2^2=-\frac{\a_2}{3\sigma}n_2^2\,,\qquad 
q_3^2=-\frac{\a_3}{3\sigma}n_3^2\,,
\end{equation}
where we denote  
\begin{equation}\label{RTHC12a}
\a_1=\frac{r_{12}r_{13}}{r_{23}},\qquad 
\a_2=\frac{r_{12}r_{23}}{ r_{13}},\qquad
\a_3=\frac{r_{13}r_{23}}{ r_{12}}\,.
\end{equation}

Compare now the diagonal terms on two sides of the system given in  Eq.(\ref{RTHC4}). For the $Y_{11}$-term we have 
\begin{equation}\label{RTHC12c}
    b_{11}{n_1^2} +b_{12}{n_2^2} +b_{13}{n_3^2}=\frac 23\a_1{n_1^2}-\frac 13\a_2{n_2^2}-\frac 13\a_3{n_3^2},
\end{equation}
or 
\begin{equation}\label{RTHC12d}
\left(b_{11}-\frac 23\a_1\right)n_1^2 
+\left(b_{12}+\frac 13\a_2\right)n_2^2 +\left(b_{13}+\frac 13\a_3\right)n_3^2
=0.
\end{equation}
Similarly, we have for the $Y_{22}$-term
\begin{equation}\label{RTHC12d1}
\left(b_{21}+\frac 13\a_1\right)n_1^2 
+\left(b_{22}-\frac 23\a_2\right)n_2^2 +\left(b_{23}+\frac 13\a_3\right)n_3^2
=0.
\end{equation}
We are able to interpret the left-hand sides of Eqs.(\ref{RTHC12d}--\ref{RTHC12d1}) as scalar products of the ``vector" $(n_1^2,n_2^2, n_3^2)$ with two vectors formed from the material parameters. from the material parameters. When the later two vectors are linearly independent,   the solution of the system is proportional to the vector product 
\begin{equation}\label{RTHC13}
    \begin{pmatrix}
n_1^2\\
n_2^2\\
n_3^2
\end{pmatrix}=C
\begin{pmatrix}
b_{11}-\frac 23\a_1\\
b_{12}+\frac 13\a_2\\
b_{13}+\frac 13\a_3
\end{pmatrix}\times
\begin{pmatrix}
b_{21}+\frac 13\a_1\\
b_{22}-\frac 23\a_2\\
b_{23}+\frac 13\a_3
\end{pmatrix},
\end{equation}
where $C$ is a normalized parameter. When the material vectors in the right-hand side are linearly independent, the solution is defined uniquely.  Since the right-hand-side  vectors depend on the material parameters only we derived here an explicit expression of the oblique acoustic axes.

The squared components of the vector $\bf q$ follow uniquely from Eq.(\ref{RTHC12b}). 
In order to derive the expression of the speed parameter $\sigma$ we write 
\begin{equation}\label{RTHC14}
1=q_1^2+q_2^2+q_3^2=- \frac{1}{3\sigma}\left(\a_1n_1^2+\a_2n_2^2+\a_3n_3^2\right).
\end{equation}
Then,
\begin{equation}\label{RTHC15}
\sigma=-\frac 13 \left(\a_1n_1^2+\a_2n_2^2+\a_3n_3^2\right).
\end{equation}
We obtain at most four oblique axes when we consider four possible sign combinations; the cubic case (\ref{cub-ax}) is one example of this. Then we come to a well-known result: the RTHC crystals can have at most 16 acoustic axes---four at every one of the coordinate planes and four out of the coordinate planes. 

Observe a special case when two material vectors on the left-hand-side of (\ref{RTHC13}) are proportional one to another or at least one of them equal zero. In particular, two matter vectors in (\ref{RTHC13}) are equal if the equations  
\begin{equation}\label{RTHC16}
    b_{11}=b_{21}+\a_1, \qquad b_{22}=b_{12}+\a_2, \qquad b_{13}=b_{23}
\end{equation}
hold. Straightforward substitution shows that these relations satisfied  for the hexagonal crystals. Indeed, in this case
the RTHC-parameters are expressed in terms of the elasticity modules as \begin{equation}
   b_{11}=b_{22}=\frac 13\left(2C_{11}-C_{66}-C_{44}\right),\qquad  b_{21}=b_{12}=\frac 13\left(2C_{66}-C_{11}-C_{44}\right)
\end{equation}
\begin{equation}
   \a_1=\a_2=C_{12}+C_{66},\qquad  b_{13}=b_{23}=\frac 13\left(C_{44}-C_{33}\right)
\end{equation}
Then Eqs.(\ref{RTHC16}) fulfil when the basic hexagonal relation $2C_{66}=C_{11}-C_{12}$ is applied. 

In this degenerate case, the components of the propagation vector ${\bf n}$ are restricted only by one equation (\ref{RTHC12d}) or (\ref{RTHC12d1}). We have 
\begin{equation}
    \left(C_{66}-C_{44}\right)(n_1^2+n_2^2)+\left(C_{44}-C_{33}+\frac{(C_{13}+C_{44})^2}{C_{12}+C_{66}}\right)n_3^2=0.
\end{equation}
If  the coefficients of this expression are of different signs, we come to a well-known result---the acoustic axes form a conic in the hexagonal crystals. We observe also an alternative situation: If all three coefficients in the quartic (\ref{RTHC12d}) are positive (negative) then oblique acoustic axes do not exist at all.

\section{Conclusion}
Conditions underlying the acoustic axis have been extensively studied for many years, and the acoustic axes have been precisely determined for all crystal symmetry classes higher than monoclinic. As noted by Norris \cite{Norris}, the "multiplicity of conditions" is the most important point to emphasize. Several systems of polynomial equations of different orders have been used in literature to describe these criteria. As far as we know, not enough research has been done on the relationships between these requirements, their necessity, and their invariant sense. 

In this work, we examine the relationships between the previously specified sets of  conditions.  Our approach relies on the reduced Christoffel tensor \cite{It-axes1}, which reduces the total number of elasticity parameters that show up within the acoustic axes conditions. 
We present a new set of conditions generated by using the minimal polynomial equation. Subsequently, the acoustic axes condition is expressed in a clear algebraic form: these are the directions in which the  third-order minimal  polynomial of $\bY$ reduces to the second order. We provide a solution to the minimal polynomial equation, which may be employed as an independent set of the acoustic axis conditions and presents the reduced Christoffel tensor by a unit vector and a scalar. The  scalar parameter represents the single eigenvalue, while the vector represents the direction of polarization. 

 We demonstrate how the several sets of requirements meet one another by employing the minimal polynomial conditions. Nonetheless, there are significant differences in the computational features of various sets of conditions. As a useful computational framework, we suggest the set of vector conditions. 
Applications of the technique for high symmetry RTHC-crystals are presented. Our primary goal here is to highlight the method's methodological and qualitative benefits. Hopefully, the techniques discussed here will prove helpful in explicitly computing the acoustic axes in the lower symmetric crystals.

\end{document}